%TC:ignore
\documentclass[12pt]{article}
\usepackage{amsmath, amssymb, amsthm}
\usepackage{graphicx, url}
\usepackage{rotating}
\usepackage{multicol}
\usepackage[small,it]{caption}
\usepackage{subfig}
\usepackage{fullpage}
\usepackage{setspace}
\usepackage{booktabs}
\usepackage{epigraph}
\usepackage[all]{xy}
\usepackage{placeins}
\usepackage{bm}
\usepackage[page]{appendix}

\usepackage{verbatim}
\usepackage{natbib}
\usepackage{ifxetex}
\ifxetex
  \usepackage{unicode-math}
  \setmainfont{TeXGyreTermes-Regular}
  \setmathfont{TeX Gyre Termes Math}
  \setsansfont{Work Sans Light}
\else
\usepackage{mathpazo}%% For the math part
  \usepackage{tgtermes}
\fi

%% Tikz for drawing figures
\usepackage{pgf, tikz}
\usetikzlibrary{positioning}
\usetikzlibrary{arrows}
\newcommand{\E}{\mathbb{E}}

\renewcommand{\P}{\mathbb{P}}

\newtheorem{proposition}{Proposition}
\newtheorem{assumption}{Assumption}
\newtheorem{lemma}{Lemma}
\newcommand{\indep}{\mbox{$\perp\!\!\!\perp$}}

%% Custom colors
\definecolor{gray}{rgb}{0.459,0.438,0.471}
\definecolor{crimson}{rgb}{0.6,0,0}

\usepackage[plainpages=false,
            pdfpagelabels,
            bookmarksnumbered,
            hidelinks,
            pdftitle={},
            pdfauthor={},
            pdfkeywords={}]{hyperref}

%% blind version macro
\newcommand{\blind}{0}
\newcommand{\submission}{0}

%% cross reference 
\if1\submission
\usepackage{xr}
\externaldocument{prepost_supp}
\fi

\newcommand{\tit}{Priming bias versus post-treatment bias in
  experimental designs}

\if0\blind

{\title{\bf \tit\thanks{Thanks to Dean Knox, Fredrick S{\"a}vje, and two anonymous reviewers from the Alexander and Diviya Magaro Peer Pre-Review at Harvard's Institute for Quantitative Social Science for helpful comments and feedback. Open source software to implement the method of this paper are included in the \texttt{prepost} R package. Imai acknowledges financial support from the
      National Science Foundation (SES--0752050).}}

  \author{Matthew Blackwell\thanks{Associate Professor, Department of Government, Harvard University. 1737 Cambridge Street,
      Institute for Quantitative Social Science, Cambridge MA 02138. 
      Email: \href{mailto:mblackwell@gov.harvard.edu}{mblackwell@gov.harvard.edu}, 
      URL: \href{https://mattblackwell.org}{https://mattblackwell.org}} \quad\quad 
      Jacob R. Brown\thanks{Assistant Professor, Department of Political Science, Boston University. 232 Bay State Road, Boston, MA 02215. 
      Email: \href{mailto:jbrown13@bu.edu}{jbrown13@bu.edu}, 
      URL: \href{https://jacobrbrown.com}{https://jacobrbrown.com}} \quad\quad 
      Sophie Hill\thanks{PhD Student, Department of Government, Harvard University. 1737 Cambridge Street, Cambridge MA 02138. 
      Email: \href{mailto:sophie_hill@g.harvard.edu}{sophie\_hill@g.harvard.edu}, 
      URL: \href{https://www.sophie-e-hill.com/}{sophie-e-hill.com}} \quad\quad \\
     Kosuke Imai\thanks{Professor, Department of Government and Department of
      Statistics, Harvard University.  1737 Cambridge Street,
      Institute for Quantitative Social Science, Cambridge MA 02138.
      Email: \href{mailto:imai@harvard.edu}{imai@harvard.edu} URL:
      \href{https://imai.fas.harvard.edu}{https://imai.fas.harvard.edu}}
      \quad\quad Teppei Yamamoto\thanks{Professor, Faculty of Political Science and Economics, Waseda University, Tokyo, Japan. Email:
      \href{mailto:tyamam@waseda.jp}{tyamam@waseda.jp}, URL:
      \href{http://web.mit.edu/teppei/www}{http://web.mit.edu/teppei/www}}
      }
\date{\today \\ Word Count: 7,226} % average of 7978 (PDF) and 6338 (detex)
\fi
\if1\blind
\title{\bf \tit}
\date{\today \\ Word Count: 7,226}
\fi

%TC:endignore
\begin{document}
%TC:ignore

\maketitle

\vspace{-1em}
\begin{abstract}
  Conditioning on variables affected by treatment can induce
  post-treatment bias when estimating causal effects. Although this
  suggests that researchers should measure potential moderators before
  administering the treatment in an experiment, doing so may also bias
  causal effect estimation if the covariate measurement primes
  respondents to react differently to the treatment. This paper
  formally analyzes this trade-off between post-treatment and priming
  biases in three experimental designs that vary when moderators are
  measured: pre-treatment, post-treatment, or a randomized choice
  between the two. We derive nonparametric bounds for interactions
  between the treatment and the moderator under each design and show
  how to use substantive assumptions to narrow these bounds. These
  bounds allow researchers to assess the sensitivity of their
  empirical findings to priming and post-treatment bias.  We then apply the
  proposed methodology to a survey experiment on electoral messaging.

  \bigskip
  \noindent {\bf Keywords:} bounds, interactions, heterogeneous effects,
  measurement, moderation, sensitivity analysis
\end{abstract}

\baselineskip=1.57\baselineskip

%TC:endignore
\newpage
\section{Introduction}
\label{sec:intro}

Ascertaining heterogeneous treatment effects is an integral part of
many survey experiments. Researchers are often interested in how
treatment effects vary across respondents with different
characteristics. For example, we may be interested both in how
implicit versus explicit racial cues affect support for a particular
policy but also in how those effects differ by levels of racial
resentment \citep{vale:hutc:whit:02}. Or we may
want to know whether the effect of land-based electoral appeals might depend on the voters' sense of land security
\citep{horowitz2020can}.  These questions of effect
heterogeneity allow researchers to explore potential causal mechanisms
and design more targeted and effective future treatments.

To examine such treatment effect heterogeneity, we must measure the
relevant covariates, such as racial resentment or land security in the
aforementioned examples, at some point during the survey experiment.
The question of \emph{when} we measure these moderators, however, is a
source of methodological debate. On the one hand, a long tradition in
political science has recognized the potential \emph{priming bias} of
a {\it pre-test design}, where covariates are measured prior to
treatment \citep[e.g.,][]{tran:07, morr:carr:fox:08, klar:13,
  klar:leep:robi:19, schi:mont:pesk:22}. For example, asking a
respondent about their party identification might lead them to
evaluate the treatment in a more partisan or political light,
resulting in biased causal effect estimates. Several studies have
documented priming effects from a range of different covariates
\citep[see][for a review]{klar:leep:robi:19}.  Some find that certain
priming effects can last for weeks
\citep{chon:druc:10}.
 
On the other hand, the practice of measuring moderators after
treatment, what we call a {\it post-test design}, has come under
scrutiny due to the possibility for \emph{post-treatment bias}
\citep{Rosenbaum84, AchBlaSen16, MonNyhTor18}. In particular, if
covariates are affected by the treatment, then conditioning on those
covariates---as required when assessing effect heterogeneity---can
bias the estimation of conditional average treatment effect and
any interactions that compare such effects. In our empirical
application, the key moderating variable of land insecurity is a
subjective, perceived measure and thus potentially affected by the
framing of a political appeal around land
rights.  
Though treatment is unlikely to affect measurements of many moderators
like basic demographics, researchers investigating manipulable
moderators face a dilemma about when to measure these covariates when
designing an experiment.

In this paper, we apply the nonparametric identification and bounding approach
of \cite{mans:95} and \cite{balk:pear:97} to formally analyze the trade-off
between priming and post-treatment biases under different experimental designs,
and propose principled ways to analyze data \citep[see ][for a similar analysis
of measurement error]{imai:yama:10}.

We begin by deriving nonparametric bounds to show that neither the pre-test nor
post-test design provides much information about conditional average treatment
effects or interactions without additional assumptions. Next, we show how three
potentially plausible assumptions can narrow the bounds. The first is
\emph{priming monotoncity}, which assumes that whether the moderator is measure
pre- or post-treatment only affects the outcome in one direction. The second is
\emph{moderator monotonicity}, which assumes that measuring the covariates after
treatment can move the value of that moderator only in one direction. The third
assumption is \emph{stable moderator under control}, whereby the covariate under
the control condition cannot be affected by the timing of treatment. None of
these assumptions can point identify the interaction between the treatment and a
moderator, but they can substantially narrow the bounds and sometimes be
informative about the sign of such an interaction.

To further sharpen our inference, we generalize the two standard experimental
designs and consider a {\it randomized placement design}, where the experimenter
randomly assigns respondents to either the pre-test or post-test design. We
demonstrate how to estimate nonparametric bounds in this context and how to
incorporate assumptions that connect the pre-test and post-test arms to further
narrow the bounds. In our supplemental materials, we also develop a parametric
Bayesian approach to incorporate pre-treatment covariates in the analysis to
sharpen our inferences and quantify estimation uncertainty.

We also derive sensitivity analysis procedures for all three designs. In these
analyses, we vary the proportion of respondents whose outcomes or moderators are
affected by when the moderator is measured and assess how the bounds change as a
function of this sensitivity parameter. This procedure allows researchers and
readers to gauge the credibility of an estimated interaction in light of a more
nuanced set of assumptions, rather than the blunt instruments of the
monotonicity and stability.

Several recent studies have empirically explored the trade-off between priming
and post-treatment bias. \cite{AlbJes22} find that the effect of moderator
placement has little effect on the estimated interaction in a question-wording
experiment. Furthermore, they find no evidence for an average effect of
treatment on the moderator, potentially reducing concerns about post-treatment
bias. \cite{SheCli23} compared several experiments when the moderators were
measured just prior to treatment or in a prior survey wave. The authors found
that estimated effects and interactions were similar across these conditions and
concluded that priming bias may not be a widespread concern for experimental
studies in political science.

Both of these papers present compelling evidence for the specific
experiments conducted in their empirical assessments, but as
\cite{SheCli23} warn us, ``we should be cautious in generalizing
[these] findings to the wide variety of studies run by political
scientists.''  Our approach, on the other hand, provides a general
methodological toolkit that can be applied to any experimental design
and allows researchers to include substantive assumptions to tailor
the framework to their applications.

The rest of the paper proceeds as follows. We first introduce a
motivating empirical example of how land insecurity moderates the
effectiveness of land-based appeals by politicians from
\cite{horowitz2020can}.  We next describe the notation and basic
assumptions of the pre-test and post-test designs. We then derive the
sharp nonparametric bounds and sensitivity analyses for the pre-test,
post-test, and randomized placement designs. Next, we apply the
proposed methods to the empirical example. Finally, we conclude by
suggesting directions for future research.

\section{Motivating Example}\label{section_example}

We illustrate the trade-off between the pre- and post-treatment measurement of a
moderator using a survey experiment conducted by \cite{horowitz2020can}. This
study investigates the effectiveness of land-based appeals to increase a
politician's electoral support in Kenya's Rift Valley. We focus on two randomly
assigned conditions: a control condition in which participants heard a generic
campaign speech with no direct reference to the land issue, and a treatment
condition that additionally referenced the land issue.\footnote{To preserve
  statistical power, we focus on a treatment group that combines two conditions:
  one stating the land issues is their top priority and one that additionally
  reference an ethnic
  grievance.} 
The outcome is the participant's reported likelihood of supporting the candidate,
which we dichotomize (likely to support the candidate vs. not) to illustrate our proposed methods.

One of the main hypotheses tested by \cite{horowitz2020can} is whether individuals
who are personally experiencing land insecurity are more responsive to land-based
appeals by politicians. While the evidence from the full sample is inconclusive,
among respondents belonging to the ``insider'' ethnic group (Kalenjins), there is a positive
and statistically significant interaction between the treatment condition
and a dummy variable for land insecurity, in line with the authors' expectations.

Both types of bias could be of concern in this
context, as land insecurity is measured by asking respondents to rate
the security of their land rights.  Asking this question
\textit{before} treatment may raise the salience of the land issue and
thus attenuate any treatment effect of hearing a land-based appeal by
a politician, a form of priming bias. Conversely, if the
experimenter asks respondents about land security \textit{after}
treatment, then their responses may be affected by the content of the
speech, 
yielding post-treatment bias.

To address these concerns, \cite{horowitz2020can} use (what we call)
the \textit{randomized placement design}, in which questions relating to
the respondent's land rights were randomly assigned to occur before or
after the treatment.
Figure~\ref{fig:hk-plot1} shows the ``na\"{i}ve'' estimates of the
treatment-moderator interaction, which are all positive and substantively large, implying
that the effect of a land-based appeal on electoral support is 25--50 
percentage points larger for land-insecure respondents 
compared land-secure respondents. However, these estimates only reach
conventional levels of statistical significance with the increased
statistical power of the combined pre/post sample. % The point estimates
% from the post-test data are somewhat sensitive to the inclusion of
% covariates.

\begin{figure}[t!]
    \centering
        \includegraphics[width=1\textwidth]{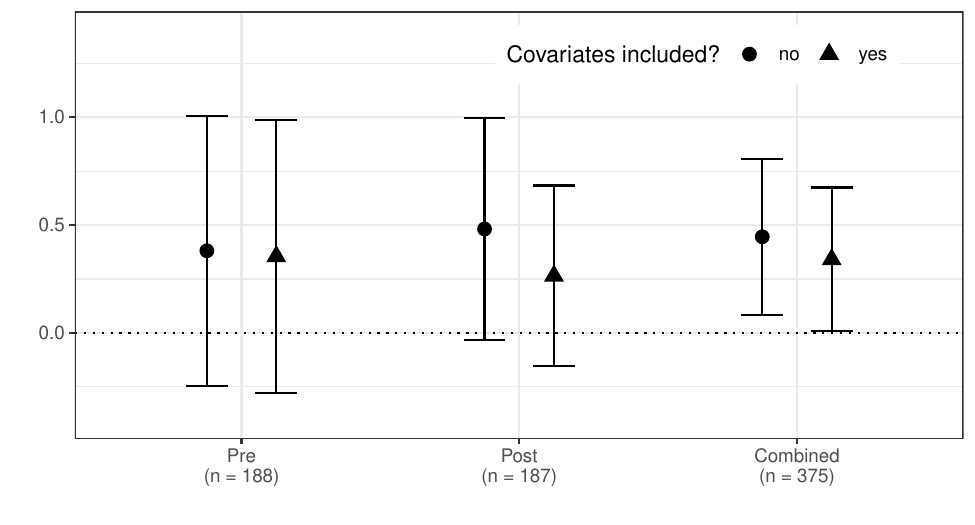}
        \caption{Estimates of treatment-moderator interaction using pre-test, post-test, and combined data from \cite{horowitz2020can}, with and without covariates (age, gender, education, and closeness to own ethnic group). Error bars indicate 95\% confidence intervals.}
        \label{fig:hk-plot1}
\end{figure}

What can we learn about the true interaction from these estimates? We show in
the following sections that the observed data often tells us very little about
interactions without strong assumption. In
Section~\ref{section_empirical_example}, we return to this empirical example to
illustrate how researchers can assess the possible impact of priming bias and
post-treatment bias on their substantive findings using our proposed methods.

\section{Experimental Designs and Causal Quantities of Interest}
\label{sec:setup}

We now lay out the formal notation of our setting and describe the
causal quantities of interest. Let $T_i$ represent the
binary treatment variable for unit $i \in \{1, \ldots, n\}$, indicating a certain
experimental manipulation received by the unit.  We use $Y_i$ to
denote a binary outcome of interest, and $M_i$ to represent an observed
binary moderator of interest.\footnote{We focus on the binary setting to ease exposition, though many of the methods could be extended to more general cases.}  Our goal is to develop a
methodology that can be used to understand how the average treatment effect of $T_i$ on
$Y_i$ varies as a function of $M_i$.

In this paper, we consider three experimental designs for this goal: the
pre-test design, the post-test design, and the randomized placement design.
These three designs differ in the timing of measurement for the potential
moderators. In the pre-test design, the experimenter measures the moderator,
$M_i$, before treatment assignment, ensuring that these measurements are
unaffected by treatment. In the post-test design, the experimenter measures
moderators after treatment assignment. Let $Z_i$ be an indicator for measuring $M_i$ before ($Z_i = 0$) or after treatment ($Z_i = 1$).

We will analyze these experimental designs using the potential outcomes
framework \citep[e.g.,][]{holl:86}. Let $Y_i(t, z)$
represent the potential outcome with respect to the treatment status $t$ and the
timing of covariate measurement, $z$. In our empirical application, for example,
$Y_{i}(1,1)$ is whether respondent $i$ would support the hypothetical candidate
if they were given the speech with referencing the land issue (that is,
treatment, $t=1$) and we measured the land security moderator \emph{after}
treatment ($z=1$).
We make the consistency assumption for the
potential outcomes, such that $Y_i = Y_i(T_i, Z_i)$.

Our goal is to estimate treatment effects on the outcome free of
priming bias, which requires us to measure moderators \emph{after}
treatment.
To this end, we define the ``true'' potential outcomes of interest as
$Y^{*}_{i}(t) \equiv Y_i(t, 1)$ and let $Y_{i}(t) \equiv Y_{i}(t,0)$ to be a
possibly mismeasured proxy observed under the pre-test design. Priming bias can
occur when the pre-test and post-test measurements of the outcome differ,
$Y_i(t) \neq Y^{*}_i(t)$.

Similarly, the measures of the moderator, self-reported land security,
can be affected by the when it is measured and the land grievance
framing treatment. Let $M_i(t, z)$ be the potential value of the
moderator that would be observed for unit $i$ when the treatment is
set to $t$ and the variable is measured in design $z$. 
Under the pre-test design, the treatment cannot affect the moderator, so we
refer to $M_i(0,0) = M_i(1,0) \equiv M^{*}_i$ as the ``true'' moderator for unit
$i$. On the other hand, under the post-test design, we let
$M_{i}(t) \equiv M_{i}(t, 1)$ to be potentially mismeasured proxy for that true
moderator. Under those designs, the treatment can affect respondents' moderator,
so that $M_i(0) \neq M_i(1)$, which can lead to post-treatment bias.

We can explore heterogeneous treatment effects by estimating the following
quantities of interest,
\begin{eqnarray}
  \tau(m) & \equiv & \E(Y_i^{*}(1) - Y^{*}_i(0) \mid M^{*}_{i} = m), \\
  \delta & \equiv & \tau(1) - \tau(0), \label{eq:delta}
\end{eqnarray}
where $m \in \{0,1\}$. The first quantity characterizes
the post-test conditional average treatment effect (CATE) as a
function of the pre-test value of the moderator. This CATE is the
effect of land-based appeals on support for a politician (when
unprimed) conditional on a level of true land insecurity. The second
quantity of interest compares the CATE between two
subpopulations with different levels of the pre-test moderator, which
is often called an \emph{interaction}.
These two quantities of interest formalize the dilemma about the
choice of pre-test versus post-test experimental designs. 
Identifying $\tau(m)$ requires observing the true potential outcomes of interest
($Y^{*}_i(t)$) and true moderator ($M^{*}_{i}$) for the same unit, but this can
never occur because the former requires $Z_{i} =1$ and the latter requires
$Z_{i} = 0$.

While the pre-test design allows us to observe the true moderator, it may suffer
from priming bias because asking questions about the moderator might change the
causal effect of the treatment by cueing respondents. Thus, under the pre-test
design, neither $\tau(m)$ nor $\delta$ nor even the ATE,
$\E(Y^{*}_i(1) - Y^{*}_i(0))$, is identified. In contrast, under the post-test
design, the ATE can be estimated without bias, and yet this design may result in
post-test bias for $\tau(m)$ and $\delta$ when the treatment affects the
moderator.

\section{Nonparametric Analysis of the Three Experimental Designs}
\label{sec:method}

We now show what we can learn from each of the three possible
experimental designs. Generally speaking, none of the three designs
are informative about the sign of the interaction effect without further
assumptions. For each design, we derive sharp
bounds for the interaction and show how to narrow these bounds with
additional substantive assumptions. We also develop a sensitivity
analysis procedure that allows researchers to vary the strengths of
such assumptions and assess their implications.

\subsection{Pre-test Design}
\label{subection_pretest_analysis}

We first investigate the identifying power of the pre-test design,
where we observe $(Y_i, M_i, T_i)$ among the units for whom $Z_i = 0$.
We first formalize the randomization of treatment in this setting. 
\begin{assumption}[Pre-test Randomization]
\label{a:randomization-pre}
\begin{eqnarray}
  \{Y_i(t, z), M_i(t, 1)\} \ \indep \ T_i & \mid & M^{*}_i = m, Z_i = 0,
\end{eqnarray}
for $t=0,1$, $m \in \{0,1\}$.
\end{assumption}
This assumption allows for the possibility of conditioning our
randomization on the pre-treatment moderator, though this assumption
also holds when randomization is unconditional. 

By computing the difference in the average observed outcomes between different treatment
groups, researchers can identify the following quantity,
\begin{eqnarray}
  \tau_{pre}(m) & \equiv &  \E(Y_i \mid T_i = 1, M_i = m, Z_i = 0) - \E(Y_i \mid
  T_i = 0, M_i = m, Z_i = 0) \nonumber \\
  & = & \E(Y_i(1) - Y_i(0) \mid M^{*}_i = m), \label{eq:tau-pre}
\end{eqnarray}
where the equality follows from Assumption~\ref{a:randomization-pre} and the
consistency assumption. Equation~\eqref{eq:tau-pre} does not generally equal the
true CATE, $\tau(m)$, since the conditional distribution of $Y_i(t)$ given
$M^{*}_i = m$ may differ from that of $Y^{*}_i(t)$ given $M^{*}_i = m$, since
asking questions about the moderator before measuring outcome may distort subsequent
responses. Thus, $\tau_{pre}(m)$ is the CATE of the land grievance treatment on the \emph{primed} support for the candidate. Similarly, $\delta$ is not generally equal to
$\delta_{pre} \equiv \tau_{pre}(1)-\tau_{pre}(0)$. The key problem here is that
the pre-test design gives us information about the correct values of the
moderator $M^{*}_{i}$, but provides no information about the outcomes we want to
investigate, $Y^{*}_{i}(t)$. 

With no restrictions on the priming bias, we cannot rule out extreme
possibilities, such as all respondents changing their values of $Y_{i}$ in
response to the priming. This could occur if asking about land security caused
all supporters to oppose the hypothetical candidate and vice versa. While
perhaps unlikely, this scenario is possible under the randomization
assumption alone.
Thus, the nonparametric bounds under
the pre-test design remain identical to the logical bounds,
\begin{eqnarray}
\tau(m) & \in & [-1,\ 1], \\
\delta & \in & [-2,\ 2].
\end{eqnarray}
In other words, the pre-test design is completely uninformative about
the causal effects of interest unless one is willing to make
additional assumptions about the joint distribution of $Y^{*}_i(t)$
and $Y_i(t)$.  

We can derive an expression for the amount of priming bias as
\begin{equation}
  \label{eq:priming-bias}
  \begin{aligned}
  \tau_{pre}(m) - \tau(m) &= \left\{\P(Y^{*}_{i}(1) = 0, Y_{i}(1) = 1 \mid M^{*}_{i} = m) - \P(Y^{*}_{i}(0) = 0, Y_{i}(0) = 1 \mid M^{*}_{i} = m)\right\} \\ & \quad- \left\{\P(Y^{*}_{i}(1) = 1, Y_{i}(1) = 0 \mid M^{*}_{i} = m) - \P(Y^{*}_{i}(0) = 1, Y_{i}(0) = 0 \mid M^{*}_{i} = m)\right\},
  \end{aligned}
\end{equation}
where we have suppressed the conditioning on $Z_{i} = 0$ to reduce
notational burden. The first term of this bias for the CATE is the effect of treatment on probability of being positively primed (moving from $Y^{*}_{i} = 0$ to $Y_{i}=1$) and the second term is the effect of treatment on being negatively primed (moving from $Y^{*}_{i}= 1$ to $Y_{i}=0$). Positively primed individuals in our empirical application would be those that would become supportive of the candidate only when asked about land security prior to treatment. We can see that this bias is unidentifiable because it depends on the joint distribution of the unprimed $Y_{i}^{*}(t)$ and primed $Y_{i}(t)$ outcomes, but we never observe both of these variables for the same unit.

Some scholars, such as \cite{SheCli23}, have shown empirically that
$\delta$ and $\delta_{pre}$ appear close to each other in several
experiments where the moderator was measured in a prior survey wave
(and so was less prone to priming bias). Should we conclude from this
that future studies will not suffer from priming bias? Perhaps, but we
should properly view this as an additional strong assumption rather than a result implied by the design of the experiment. In particular, we would have to assume that either there is no priming at all, treatment does not affect priming,  or that the treatment effects on positively and negatively primed individuals are exactly equal to each other and cancel out. Below, we will see how to incorporate assumptions about limited priming that would allow a sensitivity analysis for pretest designs.

\subsubsection{Narrowing the Pre-test Bounds under Additional Assumptions}

What assumptions can we place on the pre-test design to narrow the bounds? Any such assumption would have to place restrictions on the joint distribution of the primed and unprimed outcomes. A common set of assumptions for this type of setting would be a \emph{monotone treatment response} assumption for the priming effect \citep{Manski97}. In particular, we consider a priming
monotonicity assumption, which states that the effect of asking the
moderator before treatment can only move the outcome in a single
direction.
\begin{assumption}[Priming Monotonicity]\label{asp:prime}
  $Y_i(t) \geq Y^{*}_i(t)$ for all $t = 0,1$.
\end{assumption}
The assumption implies that the effect of moving from post-test ($Z_i = 1$) to
pre-test ($Z_i = 0$), which we call the priming effect, can only increase the
outcome. In our application, this assumption implies that asking
about land rights before reading the campaign speech treatment increases support
for the hypothetical candidate. As stated, this assumption holds across levels
of treatment, though it is possible to assume the reverse direction
($Y_i(t) \leq Y^{*}_i(t)$) or even to have a different effect direction for each
level of treatment. 

\begin{proposition}[Pre-test Sharp Bounds]
  Let $P_{tzm} = \P(Y_i = 1 \mid T_i = t, Z_i = z, M_i = m)$. Under
  Assumptions~\ref{a:randomization-pre} and~\ref{asp:prime}, we have
  the following sharp bounds:
  \begin{equation}
    \label{eq:1}
\tau(m) \in \left\{-P_{00m},\; P_{10m} \right\},
\end{equation}
and
\begin{equation}
  \label{eq:2}
  \delta \in \left\{-P_{100} - P_{001},\; P_{101} + P_{000}\right\}.
\end{equation}
\end{proposition}

Priming monotoncity narrows the bounds compared to the uninformative randomization bounds, but the bounds still will always contain zero. Thus, without further assumptions, the pre-test design cannot rule out no CATE for any level of the moderator nor can it rule out no interaction. 

\subsubsection{Sensitivity Analysis for the Pre-test Design}

Given the empirical results such as \cite{SheCli23} pointing to a limited role
of priming in some experimental settings, we now develop a sensitivity analysis
for pre-test designs to determine how sensitive results are to deviations from
the ``no priming bias'' assumption. In particular, we constrain the proportion
of respondents primed, or more precisely, the proportion of respondents whose
value of $Y^{*}_i(t)$ is different from $Y_i(t)$. Formally, we state this
restriction as
\begin{equation}
  \label{eq:theta_restriction}
  \Pr(Y^{*}_i(t)=1,Y_i(t)=0\mid M^{*}_i=m) + \Pr(Y^{*}_i(t)=0,Y_i(t)=1\mid M^{*}_i = m) \ \leq \ \theta
\end{equation}
for all $t, m \in\{0,1\}$. Note that if $\theta=0$, then we have
$Y^{*}_i(t)=Y_i(t)$ for all $i$, and the data under the pre-test design
alone identify the quantities of interest. By increasing the value of
$\theta$, a sensitivity analysis gradually restricts the severity of
the priming effects in the empirical setting.

\begin{proposition}[Pre-test Sharp Bounds under Restricted Priming Effects]\label{prop:pre_sens}
  Suppose that Assumptions~\ref{a:randomization-pre}
  and~\ref{asp:prime}, and
  restriction~\eqref{eq:theta_restriction} hold. Then, we have the
  following sharp bounds:
  \[
\tau(m) \in \left\{\tau_{pre}(m) - \theta, \tau_{pre}(m) + \theta\right\}, \qquad
\delta \in \left\{\delta_{pre} - 2\theta, \delta_{pre} + 2\theta\right\}.
\]
\end{proposition}
This proposition says that if the proportion of primed respondents in each
treatment arm is no larger than $\theta$, then we can bound the true interaction
with a $4\theta$ wide interval around the naive pre-test interaction (for the
CATE, the interval width is exactly half, i.e., $2\theta$). These bounds can be
informative (that is, they exclude 0) if $|\delta_{pre}| > 2\theta$.

\subsection{Post-test Design}
\label{subection_pt_analysis}

Next, we consider the post-test design, where we observe $(Y_i, M_i, T_i)$ among the units for whom $Z_i = 1$.  The randomization of the treatment implies the following ignorability assumption.
\begin{assumption}[Post-Test Randomization]
\label{a:randomization}
$$\{Y_i(t, z), M_i(t), M^{*}_i\} \ \indep \ T_i \mid Z_i = 1,$$
for $t = 0,1$.
\end{assumption}
How informative is Assumption~\ref{a:randomization} alone about the CATEs and
interaction without making any other assumption? The standard CATE estimator
would be unbiased for
\begin{eqnarray}
  \tau_{post}(m) & \equiv & P_{11m} - P_{01m} = \E[Y^{*}_i(1) \mid M_i(1)=m] - \E[Y^{*}_i(0) \mid M_i(0) = m],
\label{eq:tau-post}
\end{eqnarray}
where the equality follows from Assumption~\ref{a:randomization} and
the fact that $\Pr(Z_i=1)=1$ for all $i$.

Under the post-test design, we observe the true potential outcome of
interest $Y_i^\ast(t)$, and yet the moderator may be observed with
error, i.e., $M_i(t) \ne M_i^\ast$.  Similar to the case of the
pre-test design, therefore, $\tau_{post}(m)$ does not generally equal
$\tau(m)$ because the conditional distribution of $Y^{*}_i(t)$ given
$M_i(t) = m$ may differ from that of $Y^{*}_i(t)$ given $M^{*}_i = m$.
In fact, $\tau_{post}(m)$ may not equal $\tau(m)$
even if the effects of treatment and moderator measurement timing do
not change the marginal distribution of the moderator, i.e.,
$\Pr(M_i(0) = m)= \Pr(M_i(1) = m) = \P(M_{i}^{*} = m)$ for all
$m \in \mathcal{M}$. Restricting the {\it marginal} distribution of
the moderator does not help because effect heterogeneity is a function
of the {\it joint} distribution of the counterfactual outcomes and
moderators. Thus, under the post-test design, neither $\tau(m)$ nor
$\delta$ is nonparametrically identified.

Despite the lack of point identification, the design can contain some
information about the causal quantities of interest. The following proposition
derives the sharp (i.e., shortest possible) randomization-only bounds under the
post-test design.
\begin{proposition}[No-assumption Post-test Sharp Bounds]\label{prop:randomization-bounds}
  Let $P_{t} = \P(Y_{i} = 1 \mid T_{i} = t, Z_{i} = 1)$. Under Assumptions~\ref{a:randomization}, we have $\delta \in [\delta_{L}, \delta_{U}]$, where
\begin{align}
\delta_{L} & =  -1 - \max\left\{ \min\left(\frac{1-P_{0}}{P_{1}}, \frac{1-P_{1}}{P_{0}}\right), \min\left(\frac{P_{1}}{1 - P_{0}}, \frac{P_{0}}{1 - P_{1}}\right)\right\}, \label{eq:bounds-post-delta}\\
  \delta_{U}& =  1 + \min\left\{ \max\left(\frac{1 - P_{0}}{P_{1}}, \frac{1-P_{1}}{P_{0}}\right), \max\left(\frac{P_{1}}{1 - P_{0}}, \frac{P_{0}}{1-P_{1}}\right)\right\}.\nonumber
\end{align}
Furthermore, these bounds are sharp. 
\end{proposition}

The proof of this result is given in
Section~\ref{sec:randomization-bounds-proof} and rely on a standard
linear programming approaches often used in bounding causal quantities
\citep[e.g.,][]{balk:pear:97}. 
Unfortunately, these bounds are often quite wide in practice. In particular, the
bounds can never be narrower than $[-1, 1]$ since the post-test data are
completely uninformative about the true moderator. As a result, the sharp bounds
under the post-test design can be quite wide and sometimes cover the entire
possible range, $[-2,2]$, for $\delta$. In sum, without additional assumptions,
the post-test design can only provide limited information about the causal
quantities of interest.

\subsubsection{Narrowing the Post-test Bounds under Additional Assumptions}

While the bounds only using the randomization can be too wide to be useful in
practice, we may be willing to entertain other assumptions on the
causal structure that will narrow the bounds.
We rely on a principal stratification approach in which we stratify the units
according to how their moderator values react to treatment and moderator
measurement timing \citep{fran:rubi:02}. Let
$\mu_{s}(t) = \P[Y^{*}_i(t) = 1 \mid S_i = s]$, where $S_i$ represents the
principal strata defined by the moderator, $\{M_{i}(1), M_{i}(0), M^{*}_i\}$.
Without making any assumption, $S_i$ can take any of the $2^3$ values in
\[
\mathcal{S} = \{111, 011, 101, 001, 110, 010, 100, 000\}.
\]
Let $\rho_{s} = \Pr[S_i = s]$ be the probability of a unit falling
into one of the strata, such that $\sum_{s\in\mathcal{S}}\rho_s =
1$. 
Finally, we denote the marginal probability of the true pre-test moderator as $Q_* = \P(M^{*}_i = 1)$.

The first assumption we consider is that the effect of post-treatment
measurement of the moderator has a \emph{monotonic} effect on the
moderator for every unit.
\begin{assumption}[Moderator Monotonicity]\label{a:monotonicity}
$M_i(t) \geq M^{*}_i$ for all $t = 0,1$.
\end{assumption}
In the context of the motivating example, this assumption requires no unit to have (say) lower land security values if we ask about it after the respondent reads the politician's speech rather than before. While weaker than assuming there is no measurement error in the post-test moderator, the plausibility of this assumption will depend on
the experimental context. In the post-test design, we cannot verify
this assumption because we never observe $M_{i}^{*}$.  
The assumption rules out several possible principal strata, ensuring that $S_i$
can only take one of the following values: $111, 110, 010, 100,$ or $000$.\footnote{While
we present a positive version of monotonicity for both
treatment levels, it is possible to derive bounds under a negative version of the assumption or with differing directions for each treatment condition.} We now
present the sharp bounds under this assumption.

\begin{proposition}[Post-test Sharp Bounds under Monotonicity]\label{prop:monotonicity}
  Let $Q_{tz} = \Pr(M_i = 1 \mid T_i = t, Z_i = z)$. Under Assumptions~\ref{a:randomization}
and~\ref{a:monotonicity}, we have
sharp bounds $\delta \in [\delta_{L2}, \delta_{U2}]$, where
\[
  \begin{aligned} 
    \delta_{L2} &= \frac{P_{111} Q_{11} - P_{011}Q_{01}}{Q_*}  -
    \frac{P_{110} (1 - Q_{11}) -P_{010} (1 - Q_{01})}{1-Q_{*}} \\
    & \qquad + \frac{\max\{P_{011}Q_{01} - Q_*, 0\} - \max\{P_{111}Q_{11}, Q_{11} - Q_*\}}{Q_*(1 - Q_*)}, \\
    \delta_{U2} &= \frac{P_{111} Q_{11} - P_{011}Q_{01}}{Q_{*}} -
    \frac{P_{110} (1 - Q_{11}) -P_{010} (1 - Q_{01})}{1-Q_{*}}\\
    &\qquad + \frac{\min\left\{0, Q_* - P_{111}Q_{11}\right\} + \min\left\{ P_{011}Q_{01}, Q_{01} - Q_* \right\}}{Q_*(1 - Q_*)}.
  \end{aligned}
\]
\end{proposition}
We provide the derivation of these bounds (and those in the next
proposition) in Section~\ref{sec:bounds-proof}. 
These bounds will differ from the above randomization bounds given in
Proposition~\ref{prop:randomization-bounds} in two ways. First, with
the randomization assumption alone, we could only leverage the
observed strata within levels of treatment---further stratification in
terms of the moderator provided no information because it placed no
restriction on the relationship between the pre-test and post-test
versions of the moderator.  Under moderator monotonicity
(Assumption~\ref{a:monotonicity}), we can leverage $P_{tzd}$ and
$Q_{tz}$ to narrow the bounds. Second, moderator monotonicity places bounds on
the true value of $Q_*$ since it must be less than
$\min(Q_{11}, Q_{01})$.

While the moderator monotonicity assumption does narrow the bounds,
they are often still quite wide and usually contain 0.  To further
narrow the bounds, we consider another assumption that the moderator
is stable in the control arm of the study.
\begin{assumption}[Stable Moderator under Control]\label{a:stability}
$M^{*}_i = M_i(0)$
\end{assumption}
\noindent This assumption implies that the moderator under control in the
post-test design is the same as the moderator as if it was measured pre-test.
This assumption may be plausible in experimental designs where the control
condition is neutral or similar to the pre-test environment. In our empirical
example, this would mean that hearing the generic campaign speech, which does
not mention the land issue, does not affect perceived land insecurity. Under
both Assumptions~\ref{a:monotonicity}~and~\ref{a:stability}, the only values
that principal strata that $S_i$ can take are $\{111, 100, 000\}$.

\begin{proposition}[Post-test Sharp Bounds under Moderator Monotonicity and Stability]\label{prop:stability}
  Under Assumptions~\ref{a:randomization}, \ref{a:monotonicity}, and~\ref{a:stability}, we have $Q_* = Q_{01}$ and sharp bounds $\delta \in [\delta_{L3}, \delta_{U3}]$, where
  \[
  \begin{aligned}
    \delta_{L3} &= \frac{P_{111}Q_{11}}{Q_{01}} - P_{011} - \frac{P_{110}(1 - Q_{11})}{1-Q_{01}}  + P_{010} - \min\left\{ 1, \frac{P_{111}Q_{11}}{Q_{11} - Q_{01}}  \right\} \cdot \frac{Q_{11} - Q_{01}}{Q_{01}(1-Q_{01})}, \\
    \delta_{U3} & = \frac{P_{111}Q_{11}}{Q_{01}} - P_{011} -
    \frac{P_{110}(1 - Q_{11})}{1-Q_{01}}  + P_{010} - \max\left\{ 0,
      \frac{ P_{111}Q_{11} - Q_{01}}{Q_{11} - Q_{01}} \right\} \cdot \frac{Q_{11} - Q_{01}}{Q_{01}(1-Q_{01})}.
  \end{aligned}
\]
\end{proposition}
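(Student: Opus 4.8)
The plan is to specialize the principal-stratification linear program used in Propositions~\ref{prop:randomization-bounds} and~\ref{prop:monotonicity} to the restricted stratum set $\mathcal{S}' = \{111, 100, 000\}$ implied by Assumptions~\ref{a:monotonicity} and~\ref{a:stability}, and then read off the remaining degree of freedom explicitly. First I would record what the three surviving strata mean: stratum $111$ has $D_i(0) = D_i(0,1) = D_i(1,1) = 1$; stratum $100$ has $D_i(1,1) = 1$ but $D_i(0,1) = D_i(0) = 0$; and stratum $000$ has all three equal to $0$. From this, $D_i(0) = D_i(0,1)$ holds by construction (it is Assumption~\ref{a:stability}), so $Q_0 = \P(D_i(0)=1) = \P(D_i(0,1)=1) = Q_{01}$, where the last equality uses Assumption~\ref{a:randomization} (the post-test randomization makes $D_i(0,1)$ independent of $T_i$, so its marginal equals the observed control-arm moderator probability $Q_{01}$). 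That disposes of the first claim.

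Next I would translate the observed cell probabilities into stratum-level quantities. Write $\rho_{111}, \rho_{100}, \rho_{000}$ for the stratum probabilities (summing to one) and $\mu_{s}(t) = \P(Y_i(t,1)=1 \mid S_i = s)$. Using Assumption~\ref{a:randomization}, the observed conditional means decompose as mixtures over the strata consistent with the observed moderator value: in the treatment arm, $D_i = D_i(1,1) = 1$ occurs for strata $111$ and $100$, and $D_i = 0$ only for $000$, so $P_{111}Q_{11} = \mu_{111}(1)\rho_{111} + \mu_{100}(1)\rho_{100}$, $P_{110}(1-Q_{11}) = \mu_{000}(1)\rho_{000}$, and $Q_{11} = \rho_{111} + \rho_{100}$; in the control arm, $D_i = D_i(0,1) = 1$ only for stratum $111$, so $P_{011}Q_{01} = \mu_{111}(0)\rho_{111}$, $P_{010}(1-Q_{01}) = \mu_{100}(0)\rho_{100} + \mu_{000}(0)\rho_{000}$, and $Q_{01} = \rho_{111}$. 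Hence $\rho_{111} = Q_{01}$, $\rho_{100} = Q_{11} - Q_{01}$, $\rho_{000} = 1 - Q_{11}$, and the monotonicity constraint $Q_{01} \le Q_{11}$ is exactly what makes $\rho_{100} \ge 0$. Then I would expand the target $\delta = \tau(1) - \tau(0)$ with $\tau(d) = \E(Y_i(1,1) - Y_i(0,1) \mid D_i(0) = d)$: conditioning on $D_i(0) = 1$ selects only stratum $111$, while $D_i(0) = 0$ selects strata $100$ and $000$. This gives a closed form for $\delta$ in terms of the $\mu_s(t)$'s and the now-pinned-down $\rho$'s, in which every quantity except $\mu_{100}(1)$ is identified by matching to $P_{tzd}, Q_{tz}$ — indeed $\mu_{100}(1)$ appears in $P_{111}Q_{11}$ only as part of the sum $\mu_{111}(1)\rho_{111} + \mu_{100}(1)\rho_{100}$, so it is the single free parameter, with $\mu_{111}(1)$ moving to compensate.

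Finally I would optimize over that one free parameter subject to the feasibility constraints $\mu_{111}(1) \in [0,1]$ and $\mu_{100}(1) \in [0,1]$, together with $\mu_{111}(1)\rho_{111} + \mu_{100}(1)\rho_{100} = P_{111}Q_{11}$. Writing $a = \mu_{100}(1)\rho_{100} \in [0, \rho_{100}] = [0, Q_{11}-Q_{01}]$, feasibility of $\mu_{111}(1) = (P_{111}Q_{11} - a)/Q_{01} \in [0,1]$ forces $a \ge P_{111}Q_{11} - Q_{01}$ and $a \le P_{111}Q_{11}$; intersecting with $[0, Q_{11}-Q_{01}]$ gives $a \in [\max\{0, P_{111}Q_{11}-Q_{01}\}, \min\{Q_{11}-Q_{01}, P_{111}Q_{11}\}]$. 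Since $\delta$ is (up to the identified constant $P_{111}Q_{11}/Q_{01} - P_{011} - P_{110}(1-Q_{11})/(1-Q_{01}) + P_{010}$) a decreasing linear function of $a$ with slope $-1/(Q_{01}(1-Q_{01})) \cdot (\text{factor absorbing the } \rho\text{'s})$, the extremes of $a$ give $\delta_{L3}$ and $\delta_{U3}$ exactly as stated — the $\min\{1, P_{111}Q_{11}/(Q_{11}-Q_{01})\}$ and $\max\{0, (P_{111}Q_{11}-Q_{01})/(Q_{11}-Q_{01})\}$ in the displayed bounds are just $a/(Q_{11}-Q_{01})$ at its two endpoints. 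Sharpness then follows because every value of $a$ in the stated interval corresponds to an explicit, fully specified joint law of $\{S_i, Y_i(t,1)\}$ compatible with Assumptions~\ref{a:randomization}--\ref{a:stability} and the observed data. The main obstacle I anticipate is bookkeeping rather than conceptual: correctly identifying which single $\mu_s(t)$ remains free after matching all moments (one must check that $\mu_{111}(0), \mu_{100}(0), \mu_{000}(0), \mu_{000}(1)$ are each pinned down, or at least that any residual freedom in them does not affect $\delta$), and then verifying the degenerate cases $Q_{11} = Q_{01}$ (where $\rho_{100} = 0$ and the bounds collapse) and $Q_{01} \in \{0,1\}$ do not break the division. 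I would handle these by noting that $\mu_{100}(0)$ and $\mu_{000}(0)$ enter $\delta$ only through $\tau(0)$'s control term, which equals $P_{010}$ exactly by the mixture identity above, so they need not be individually identified, and by treating the boundary $Q_{11}=Q_{01}$ as a limit.
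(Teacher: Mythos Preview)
Your proposal is correct, and the argument you sketch would in fact prove the proposition with sharpness following transparently from the explicit one-parameter family you construct. The paper's own proof takes a different, shorter route: it simply substitutes $Q_0 = Q_{01}$ into the already-established monotonicity bounds of Proposition~\ref{prop:monotonicity} and simplifies using $P_{011}Q_{01} - Q_{01} \leq 0$ and $P_{011}Q_{01} \geq 0$, so that the two $\min/\max$ terms involving $P_{011}$ vanish and only the $P_{111}$ terms survive. In other words, the paper treats Proposition~\ref{prop:stability} as a two-line corollary of Proposition~\ref{prop:monotonicity} (which in turn rests on the general stratum-conditional bounds of Lemma~\ref{lem:general-bounds}), whereas you rebuild the linear program from scratch for the three-stratum case and identify the single unidentified mean $\mu_{100}(1)$ directly. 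Your approach is more self-contained and makes both the source of partial identification and the sharpness argument completely explicit; the paper's approach is more modular and economical, but the reader must trace sharpness back through Lemma~\ref{lem:general-bounds}. Both are valid, and your direct derivation has the added virtue of making clear exactly why $\mu_{100}(0)$ and $\mu_{000}(0)$ need not be separately identified (they enter $\delta$ only through the identified mixture $P_{010}$), a point that is implicit but not spelled out in the paper's corollary-style proof.
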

These bounds demonstrate how the magnitude of treatment effect on the
moderator affects identification in the post-test design. A unit's
moderator is affected by treatment whenever $M_i(1) \neq M_i(0)$,
which corresponds to the $S_i = 100$ principal stratum under
monotonicity and stable moderator under control.  Note that under
these assumptions, $\rho_{100}$ represents the magnitude of the
treatment-moderator effect. Since $Q_{11} = \rho_{111} + \rho_{100}$
and $Q_{01} = \rho_{111}$, we can identify this effect with the usual
difference in (population) means, $Q_{11} - Q_{01} = \rho_{100}$. The
maximum possible width of the sharp bounds depends on this effect,
with
\[
\max\{\delta_{U3} - \delta_{L3}\} = \frac{Q_{11} - Q_{01}}{Q_{01}(1 - Q_{01})} = \frac{\rho_{100}}{\rho_{000}(\rho_{111} + \rho_{100})},
\]
so that the bounds can be relatively informative if the post-treatment
average effect on the moderator is small.

\subsubsection{Sensitivity Analysis under Limited Effects on the Moderator}\label{subsection-post-sens}

While the monotonicity and stable moderator assumptions can considerably narrow
the nonparametric bounds on our causal quantities, they rule out entire
principal strata, which may be stronger than is justified for a particular
empirical setting. We now consider an alternative approach to bounds that does
not rule out any particular principal strata but rather places restrictions on
the proportion of units whose moderators are affected by treatment.

In particular, we propose a sensitivity analysis that limits the
proportion of respondents whose moderator value changes between the
pre-test and post-test, regardless of the treatment condition
(contrast this with Assumption~\ref{a:stability} which applies to the
control condition only).  We operationalize this via the following
constraint,
\[
\Pr(S_i \notin \{111, 000\}) \leq \gamma.
\]
Note that $\gamma$ must be greater than $|Q_{11} - Q_{01}|$ for the bounds to be
feasible since
$|Q_{11} - Q_{01}| = |\rho_{101} + \rho_{100} - \rho_{011} - \rho_{010}|$. We
vary the value of $\gamma$ from $|Q_{11} - Q_{01}|$ to $1$ and see how the
nonparametric bounds on the value of $\delta$ change as we gradually allow a
larger treatment effect on the moderator. We obtain these new bounds by adding
this additional constraint to linear program, which we can then solve
numerically. This approach is a more flexible way to allow for limited
heterogeneous treatment effects on the moderator in any direction. Researchers
can also combine this sensitivity analysis with the monotonicity and stable
moderator assumptions.

\subsection{Randomized Placement Design}

Finally, we examine a combined pre/post design called the randomized
placement design, where in addition to treatment, the timing of
moderator measurement, $Z_i$, is also randomized. Bounds from this design will improve upon the post-test bounds because we can identify the marginal probability of the true moderator
as
$$ Q_*  \ = \ \Pr(M^{*}_i = 1) \ = \ \Pr(M_i=1\mid Z_i=0).$$
Unfortunately, for the randomization-only bounds in
Equation~\eqref{eq:bounds-post-delta}, the sign of $\delta$ cannot be
identified for any value of $Q_* \in (0,1)$.

With the randomized placement design, we have a slightly more
complicated set of principal strata since now we must handle both
the pre-test and post-test potential outcomes. In particular, we have
the following:
\[
  \psi_{y_1y_0s}(t) \ = \ \Pr(Y^{*}_i(t)=y_1, Y_i(t)=y_0 \mid S_i = s)
\]
where $\psi_{y_1y_0s}(t) \geq 0$ and $\sum_{y_1}\sum_{y_0}
\psi_{y_1y_0s}(t) = 1$ for all $t$. These values characterize the joint
distribution of the pre-test and post-test potential outcomes for a
given treatment level, $t$, and principal strata, $s$. Furthermore,
let $\mathcal{S}^*_{m} = \{s : s \in \mathcal{S}, M^{*}_{i} = m\}$ be the set of principal strata
with the true value of the moderator equal to $m$.

Given these, we can write the interaction between the treatment and the moderator as
$$ \delta \ = \ \sum_{y_0=0}^1  \left\{ \sum_{s_1\in \mathcal{S}^*_{1}} \frac{\rho_{s_1}}{Q_*}(\psi_{1y_0s_1}(1) - \psi_{1y_0s_1}(0)) - \sum_{s_0\in \mathcal{S}^*_{0}}\frac{\rho_{s_0}}{1- Q_*}(\psi_{1y_0s_0}(1) - \psi_{1y_0s_0}(0)) \right\},$$
 and the observed strata in the pre-test and post-test arms as
\[
  \begin{aligned}
    P_{t1m}Q_{t1} &= \Pr(Y_i= 1, D_i=m \mid T_i=t, Z_i=1) \ = \ \sum_{y_0=0}^1 \sum_{s\in\mathcal{S}(t, 1, m)}\psi_{1y_0s}(t)\rho_{s}  , \\
    P_{t0m_{\ast}} & = \Pr(Y_i=1 \mid T_i = t, Z_i = 0, M_i = m_\ast) = \frac{\sum_{y_1=0}^1 \sum_{s\in \mathcal{S}^*_{m_{\ast}}} \psi_{y_11s}(t)\rho_{s}}{\sum_{s\in \mathcal{S}^*_{m_{\ast}}} \rho_{s}},
  \end{aligned}
\]
for all values of $y$, $m$, $t$, and $m_{\ast}$.  

Without further assumptions, the pre-test data are helpful only
insofar as they identify the marginal distribution of the moderator,
$Q_*$. 
We can narrow the bounds under the random placement design using all of the substantive assumptions described above: priming monotoncity, moderator monotonicity, and stability under control. Each of these implies restrictions on the above principal strata that can be incorporated into a linear programming problem that can be solved numerically to provide the bounds.

\subsubsection{Sensitivity Analysis in the Random Placement Design}

The random placement design allows us to combine the sensitivity analysis
procedure of the pre- and post-test designs. In particular, we can impose both
of the restrictions from above simultaneously:
\[
  \begin{aligned}
     \Pr(S_i \notin \{111, 000\}) &\leq \ \gamma, \\
    \Pr(Y^{*}_i(t)=1,Y_i(t)=0\mid M^{*}_i=m_\ast) + \Pr(Y^{*}_i(t)=0,Y_i(t)=1\mid M^{*}_i = m_\ast) \ &\leq \ \theta
  \end{aligned}
\]
Again, these restrict (a) how much the treatment and moderator measurement
timing affect the moderator, and (b) how much the moderator measurement timing
affects the outcome. To incoprate these restrictions into the bounds, we can
rewrite them in the principal strata described above:
\[
  \begin{aligned}
    (1 - \rho_{111} - \rho_{000}) &\leq \ \gamma, \\
    \frac{\sum_{s\in \mathcal{S}^*_{m_{\ast}}} \left( \psi_{10s}(t) + \psi_{01s}(t) \right) \rho_{s}}{\sum_{s\in \mathcal{S}^*_{m_{\ast}}} \rho_{s}} \ &\leq \ \theta.
  \end{aligned}
\]
We can easily add these restrictions to the optimization problem that produces
the bounds on the interaction.

We can conduct sensitivity analyses on the randomized placement design
by varying the values of $\gamma$ and $\theta$ and seeing how the
values of the bounds change. There are several ways to conduct and
present such a two-dimensional sensitivity analysis. One would be to
plot the parameters on each axis and demarcate the regions where the
bounds are informative (e.g., do not include zero) and where they are not. A
second approach would be to choose a small value for one of the two
parameters consistent with a researcher's beliefs. For instance, if a
researcher believes that the moderator is unlikely to be affected by
treatment, then they could choose a small value for $\gamma$ and
investigate the sensitivity of the bounds to different amounts of
priming, as measured by $\theta$. 

\subsection{Statistical Inference}
\label{subsection_statistical_inference}

The above bounds are stated in terms of population quantities when, in fact, we
only ever have sample data. We can easily obtain estimates for the bounds by
plugging in sample versions of these probabilities or, for the random placement
design, solving the linear programming problem using the sample data. Obtaining
valid confidence intervals in this setting is more challenging, since standard
asymptotic analyses break down due to the maximum and minimum operators causing
nondifferentiability. This problem can lead the standard nonparametric bootstrap
to have problematic theoretical properties \citep{AndHan09, FanSan18}.

Furthermore, confidence intervals for the bounds tend to be overly conservative
when the target of inference is the parameter $\delta$ rather than the bounds
themselves. As pointed out by \citet{ImbMan04}, this occurs because the true
parameter cannot simultaneously be close to the upper and lower bound at the
same time. This fact allows us to narrow the confidence intervals by a data-driven amount
while maintaining nominal coverage for the parameter of interest.

Our approach to inference combines the \cite{ImbMan04} approach with estimated
standard errors of the bounds from the the nonparametric bootstrap. While the
bootstrap may have theoretical problems, we find in simulations that this
approach produces conservative confidence intervals that have slightly
higher-than-nominal empirical coverage. 
In Supplemental Materials~\ref{sec:estimation_details}, we provide further details of our
approach to estimation. In Supplemental Materials~\ref{section_mcmc}, we also
develop a Bayesian model-based approach to incorporate additional pre-treatment
covariates that might be available to researchers.

\section{Empirical Example}\label{section_empirical_example}

To illustrate how our approach can be applied to both the post-test and
randomized placement design, we return to the example from
\cite{horowitz2020can} introduced in Section~\ref{section_example}. Recall that
the interaction in this case shows how the effect of land-based appeals varies
by a respondent's level of land security.

\subsection{Setup and Assumptions}

First, it is worth discussing the assumptions beyond randomization in this context.
Assumptions~\ref{a:monotonicity} (moderator monotonicity) and~\ref{a:stability}
(stability) are not guaranteed by the design and require a substantive
justification. In this case, moderator monotonicity requires the placement of
the land insecurity question after treatment shift perceived land insecurity in
the same direction for all respondents (or have no effect). We assume a positive
(or zero) individual effect on land insecurity, consistent with the estimate
ATE---though it is substantively small and not statistically significant
($\hat{\beta} = 0.04$, $p = 0.23$). Given that one part of the active treatment
was intended to induce fear of ``land grabbing,'' it seems somewhat plausible
that measuring perceived land rights after treatment would only increased
feelings of insecurity.

The assumption of stability means that hearing the generic campaign
speech (with no appeals to the land issue) has no
individual-level effect on perceived land insecurity when it is
measured post-treatment. Again, this assumption cannot be conclusively
tested with the observed data since we cannot estimate
individual-level effects. However, with the randomized placement
design, we can estimate the ATE of the pre/post randomization on the
moderator among respondents in the control group.  Our estimate of the
ATE is very small and not statistically significant
($\hat{\beta} = -0.005$, $p = 0.91$), which is at least consistent
with the assumption of a stable moderator under control.

\subsection{Comparing the Sharp Bounds across Different Assumptions}

\begin{figure}[t!]
    \centering
        \includegraphics[width=1\textwidth]{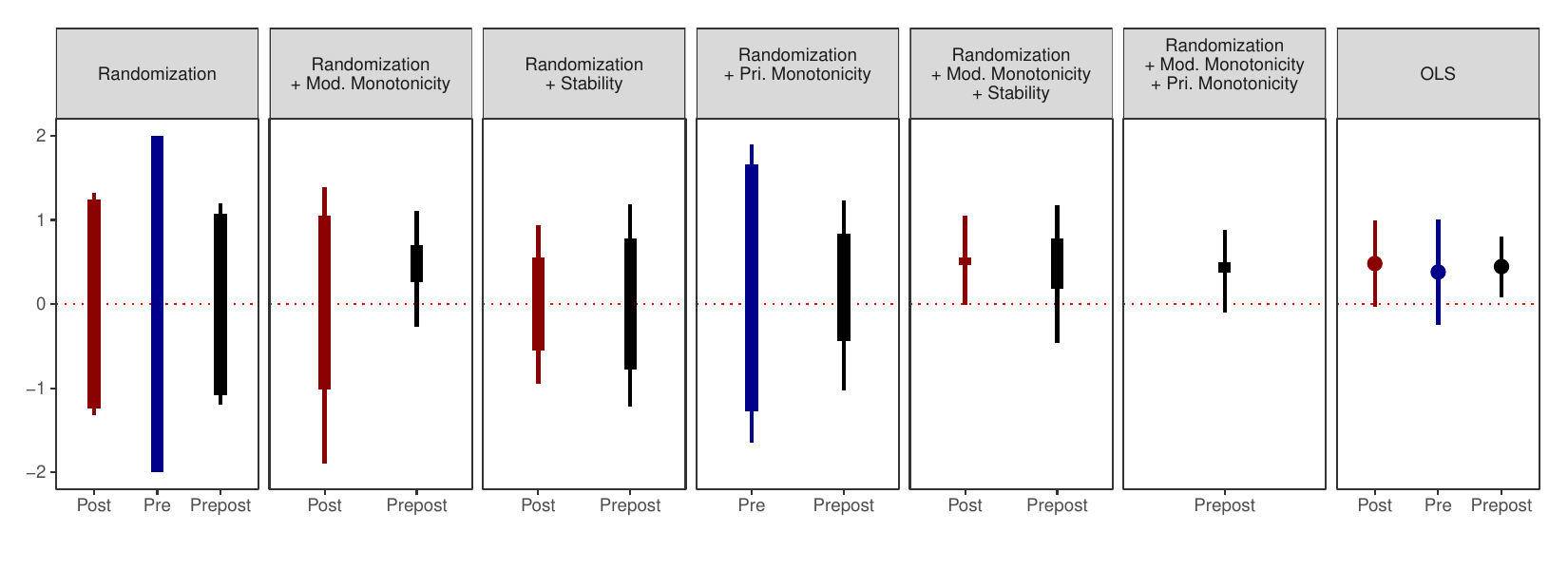}
        \caption{Estimated nonparametric bounds (thick bars) and 95\% confidence intervals (thin bars) under different designs and assumptions. The final panel contains OLS point estimates and 95\% confidence intervals. 
          }
        \label{fig:hk-plot2-non-param}
\end{figure}

Figure~\ref{fig:hk-plot2-non-param} displays the non-parametric bounds (with
95\% confidence intervals) for $\delta$ under different sets of assumptions
applied to the post-test data (in grey), the pre-test (in blue), and the
combined random placement design (labeled ``Prepost'' in black). We also include
the na\"{i}ve OLS estimate with 95\% confidence interval for comparison. Some of
the designs are missing from panels because an assumption does not apply to that
design (for example, moderator monotonicity under the pre-test design).

Assuming only randomization, the nonparametric bounds are uninformative of the
sign of $\delta$, and are much wider than the confidence interval of the
na\"{i}ve OLS estimate for all of the designs. Adding the assumption of
moderator monotonicity reduces the width of the bounds, especially on the
combined prepost data. Unfortunately, the confidence intervals become
considerably larger, especially for the post-test data, in which the $M_{i} = 1$
group is small.

The assumption of stability tightens the bounds to a similar degree for the
post-only and pre-post data. Under the pre-test assumptions of randomization,
moderator monotonicity, and stability, the bounds exclude zero for both the
post-test and random placement designs, though the confidence interval for the
latter contains zero. Furthermore, the bounds for this design are wider under
all three assumptions than under just randomization and monotonicity since, in
finite samples, the pre-test mean of $M_{i}$ differs from the post-test mean of
$M_{i}$ in the control arm. We would expect a difference by random chance even
when stability holds, but the bounds may widen slightly to accommodate this
divergence between the population and sample quantities. 

Priming monotonicity also narrows the bounds for the pre-test and random
placement designs, though not by as much as moderator monotonicity. The
combination of the two monotonicity assumptions, however, recovers bounds that
are close to the OLS estimate and has confidence intervals that barely contain
zero.

In sum, the nonparametric bounds do not support the hypothesis of a positive
interaction effect under the randomization assumptions alone. It is only with
a combination of additional substantive assumptions---moderator monotonicity,
priming monotonicity, and stability---that the sharp bounds produce results
qualitatively similar to the na\"{i}ve OLS estimate of a positive interaction
effect, though the designs differ on whether this is statistically significant
or not.

\subsection{Implementing the Sensitivity Analysis}

We now apply our sensitivity analysis procedures to this experiment. These procedures involve varying the proportion of respondents for whom the placement of the moderator measure affects their moderator value (land security) or their outcome (candidate support), labeled $\gamma$ and $\theta$ respectively. We can apply the $\gamma$ and $\theta$ sensitivity analyses to the post-test and pre-test designs, respectively, and we can combine them in the random placement design.

Figure~\ref{fig:hk-plot22} shows the post-test bounds as a function of $\gamma$
under just the randomization assumption. While $\gamma$ can theoretically range
up to 1, here we limit it to 0.5 to aid presentation since the bounds quickly
stabilize. The black lines denote the upper and lower bounds, and the shaded
ribbon denotes the 95\% confidence intervals around the bounds. The lower bound
crosses 0 when $\gamma = 0.07$---that is, when no more than 7\% of respondents
are affected by the post-treatment measurement of the moderator. The 95\%
confidence interval contains 0 even for the minimum possible value of $\gamma$
consistent with the observed data (0.02). The sensitivity analysis shows that
the sharp bounds are highly sensitive to changes in the degree of post-treatment
bias for small values of $\gamma$.

\begin{figure}[t!]
  \centering
  \includegraphics[width=0.8\textwidth]{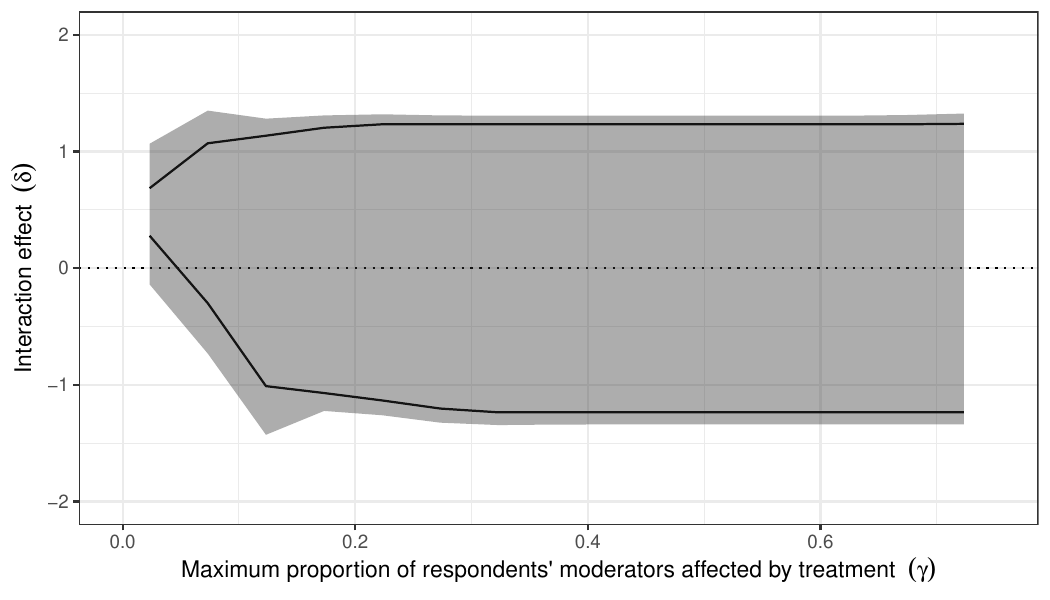}
  \caption{Post-test sensitivity analysis. Nonparametric bounds (black lines) with 95\% confidence intervals (grey ribbon) as a function of $\gamma$, the proportion of respondents whose value of the moderator variable (land insecurity) is affected by post-test measurement.}\label{fig:hk-plot22}
\end{figure}

To interpret this sensitive analysis, researchers will need to draw on their
substantive knowledge to assess the plausible range of $\gamma$. The estimated
ATE on the post-treatment moderator is $0.02$ ($p = 0.59$), but without a
monotonicity assumption this may include respondents with positive and negative
effects that offset their effects. We might assume that most of the effect of
the land rights prime would accrue to respondents with less certain responses
such as feeling ``somewhat secure'' in their land rights and that had been
personally affected by prior ethnic conflict. Using the pre-test data, we find
that this is 12\% of the sample, which we might consider a reasonable value of
$\gamma$. While it may seem like a relatively minor problem if only 12\% of the
sample is affected, our sensitivity analysis shows that the nonparametric bounds
would be about eight times wider ($[-1.02, 1.14]$) compared to the case where
$\gamma$ is at its minimum ($[0.38, 0.64]$).

\begin{figure}[t!]
  \centering
  \includegraphics[width=0.8\textwidth]{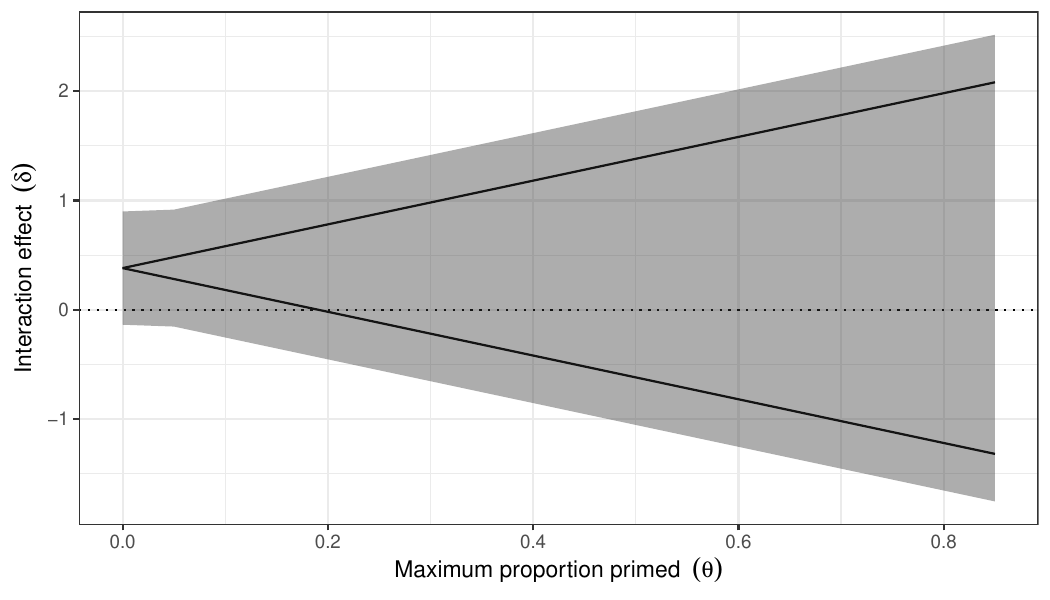}
  \caption{Pre-test sensitivity analysis. Nonparametric bounds (black lines) with 95\% confidence intervals
    (grey ribbon) as a function of $\theta$, the proportion of respondents
    who are primed by asking the moderator before treatment, under priming monotoncity assumption.} \label{fig:hk-plot22-pre}
\end{figure}

For priming bias in the pre-test design, we can assess sensitivity as a function
of $\theta$, the proportion of respondents whose outcome value is affected by
when the moderator is measured. Figure~\ref{fig:hk-plot22-pre} shows the bounds
as a function of $\theta$ under priming monotonicity. When $\theta = 0$, we are
assuming that priming bias does not exist so the true interaction is point
identified in the pre-test arm, though the confidence interval at $\theta = 0$
already includes zero. The estimated lower bound crosses zero at approximately
0.2, when up to 20\% of respondents are primed, which would be a rather large
effect of priming.

Finally, in the random placement design, we combine these two tests by restricting both $\gamma$ and $\theta$. Figure~\ref{fig:hk-prepost-sens} shows the prepost bounds as a function of $\gamma$ under two different assumption about the amount of priming: limited priming $\theta \leq 0.25$ and unrestricted priming $\theta \leq 1$. In both settings, the bounds initially shrink as $\gamma$ increases, most likely to due to the low values of $\gamma$ being inconsistent with the data. The bounds then begin to widen, though they widen much more for the unrestricted priming compared the limited priming setting. 

\begin{figure}[t!]
  \centering
\includegraphics[width=0.8\textwidth]{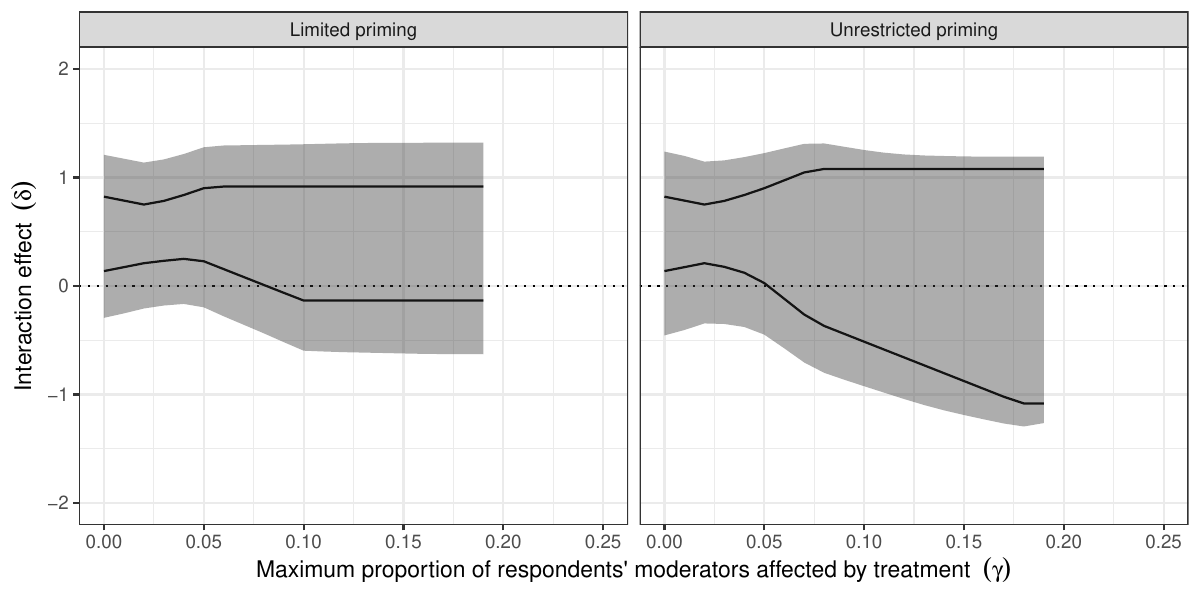}
  \caption{Random placement design sensitivity analysis. Nonparametric bounds (black lines) and 95\% confidence intervals (grey ribbons) as a function of the post-test effect on the moderators and the amount of priming. The limited priming assumption assumes $\theta \leq 0.25$ and the unrestricted priming has $\theta \leq 1$.}
  \label{fig:hk-prepost-sens}
\end{figure}

\section{Concluding Remarks}

This paper addresses a central tension in survey methodology: how should researchers assess priming bias versus post-treatment bias when designing a survey experiment? We provide sharp bounds for interactions with moderators measured post-treatment and show how these bounds vary under additional substantive assumptions. We also provide sensitivity analyses for both types of bias by varying the proportion of respondents whose moderator value changes in the post-test design and the proportion of respondents for whom the pre-test measurement of the moderator would prime their responses. We demonstrate how these tools can be used to diagnose and assess the severity of post-treatment bias and priming bias by applying them to a survey experiment regarding the effect of land-based appeals by politicians on electoral support in Kenya.

Open questions remain from our approach here. In particular, future work could optimize the random placement design to balance the priming and post-treatment bias concerns. In addition, we could consider how integrating separate pre-test surveys, often given weeks or months before treatment, might allow for a different set of plausible assumptions and identification.

%\newpage
\pdfbookmark[1]{References}{References}
\bibliographystyle{apsr}
\bibliography{prepost}

%TC:ignore
\if0\submission{}
\clearpage
\begin{appendices}
\renewcommand{\thesection}{\Alph{section}}
\renewcommand\thefigure{SM.\arabic{figure}}
\renewcommand\thetable{SM.\arabic{table}}
\setcounter{lemma}{0}
\renewcommand{\thelemma}{SM.\arabic{lemma}}
 
\section{Proofs}

\subsection{Pre-test bounds}
\label{sec:pre-test-proofs}

\begin{proof}[Proof of Proposition~\ref{prop:monotonicity}]
  Note that we can write $\tau_{pre}(m) = P_{10m} - P_{00m}$. From expression~\ref{eq:priming-bias}, we can write the true CATE under priming monotonicity as
  \[
  \begin{aligned}
   \tau(m) &=  P_{10m} - P_{00m} - \left\{\P(Y^{*}_{i}(1) = 0, Y_{i}(1) = 1 \mid M^{*}_{i} = m) - \P(Y^{*}_{i}(0) = 0, Y_{i}(0) = 1 \mid M^{*}_{i} = m)\right\}.
  \end{aligned}
  \]
  We can bound the unknown probabilities as
  \[
  0 \leq \P(Y^{*}_{i}(t) = 0, Y_{i}(t) = 1 \mid M^{*}_{i} = m) \leq P_{t0m}.
  \]
  Thus, a sharp upper bound on $\tau(m)$ would because
  \[
    P_{10m} - P_{00m} + P_{00m} = P_{10m},
  \]
  and a sharp lower bound would be
  \[
    P_{10m} - P_{00m} - P_{10m} = -P_{00m},
  \]
  which establishes the bounds for $\tau(m)$. For the upper $\delta = \tau(1) - \tau(0)$, we simply use the upper bound for $\tau(1)$ and the lower bound for $\tau(0)$ and vice versa for the lower bound for $\delta$.
\end{proof}

\begin{proof}[Proof of Proposition~\ref{prop:pre_sens}]
  Under priming mononoticity, we can write the true CATE as
  \[
  \begin{aligned}
   \tau(m)  &= \tau_{pre}(m) -\left\{\P(Y^{*}_{i}(1) = 0, Y_{i}(1) = 1 \mid M^{*}_{i} = m) - \P(Y^{*}_{i}(0) = 0, Y_{i}(0) = 1 \mid M^{*}_{i} = m)\right\}.
  \end{aligned}  
\]
The restriction on the amount of priming bias implies that
\[
0 \leq \P(Y^{*}_{i}(t) = 0, Y_{i}(t) = 1 \mid M^{*}_{i} = m) \leq \theta,
\]
for all $t$. Thus, clearly we have
\[
\tau(m) \in \left[\tau_{pre}(m) - \theta, \tau_{pre}(m) + m\right]. 
\]
Taking the maximum and minimum of $\tau(1)$ and $\tau(0)$, respectively, establishes the upper bound for $\delta$. Reversing this gives the lower bound. 

\end{proof}

\subsection{Post-test randomization bounds}
\label{sec:randomization-bounds-proof}

\begin{proof}[Proof of Proposition~\ref{prop:randomization-bounds}]

  Under Assumption~\ref{a:randomization}, the information about the
  parameter of interest comes from
  $P_{t} = \P(Y_{i} = 1 \mid T_{i} = t, Z_{i} = 1)$ alone.  This is
  because the distribution of the post-test moderators provides no
  information about the pre-test moderator. Recall that
\begin{equation}
  \label{eq:pt-pi}
P_{t} = \pi_{t1}Q_{*} + \pi_{t0}(1 - Q_{*}),
\end{equation}
where $\pi_{tm} = \P[Y_{i}(t, 1) = 1 \mid M^{*}_i = m]$ and $Q_{*} = \P[M^{*}_i = 1]$.

Below, we show how to derive the upper bound for $\delta$. The
derivation of the lower bound is similar.  Conditional on $Q_{*}$, we
can define the following linear program:
\[
  \begin{aligned}
    \max \; & \pi_{11} - \pi_{01} - \pi_{10} + \pi_{00} \\
    \text{subject to} \;& \pi_{t1}Q_{*} + \pi_{t0}(1- Q_{*}) = P_{t}
    \quad \text{for } t = 0,1,\\
%         & \pi_{01}Q_{*} + \pi_{00}(1 - Q_{*}) = P_{0}, \\
    & 0 \leq \pi_{tm} \leq 1 \quad \forall (t, d) \in \{0,1\}^{2}
  \end{aligned}
\]
We can convert this to an augmented form by adding slack variables,
\[
  \begin{aligned}
    \max \; & \pi_{11} - \pi_{01} - \pi_{10} + \pi_{00} \\
    \text{subject to} \;& \pi_{t1}Q_{*} + \pi_{t0}(1- Q_{*}) = P_{t}
    \quad \text{for } t = 0,1 \\
            &  \pi_{tm} + s_{tm} = 1 \quad  \forall (t, m) \in \{0,1\}^{2} \\
    & \{\pi_{11}, \pi_{01}, \pi_{10}, \pi_{00}, s_{11}, s_{01}, s_{10}, s_{00}\} \geq 0.
  \end{aligned}
\]

The feasibility of various basic solutions here will depend on the relationship between the observed probabilities and $Q_{*}$.  In Table~\ref{t:bfs}, we show basic feasible solutions for the four different conditions relating $P_{1}$ and $P_{0}$ to $Q_{*}$. Under each condition, it is straightforward to determine that the basic feasible solution is also optimal since there is no entering variable that can increase the value of the quantity of interest. Thus, we know that
\[
\delta \leq \min\left\{\frac{1+P_1-P_0}{Q_*}, \ \frac{1-P_1+P_0}{1-Q_*}, \ \frac{1-P_0}{Q_*} + \frac{1-P_1}{1-Q_*}, \ \frac{P_1}{Q_*} + \frac{P_0}{1-Q_*} \right\}
\]

\begin{table}
  \caption{Optimal solutions to the linear program under different conditions}\label{t:bfs}
\small
\begin{tabular}{l|rrrrrrrr|r}
  Condition & $\pi_{11}$& $\pi_{10}$& $\pi_{01}$& $\pi_{00}$& $s_{11}$& $s_{01}$& $s_{10}$& $s_{00}$ & Value \\ \hline
  $P_{1} > Q_{*}, P_{0} > 1 - Q_{*}$ & $\frac{P_{1}}{Q_{*}}$ & 0 & 0 & $\frac{P_{0}}{1 - Q_{*}}$ & $1 - \frac{P_{1}}{Q_{*}}$ & 1 & 1 & $1 - \frac{P_{0}}{1 - Q_{*}}$ & $\frac{P_{1}}{Q_{*}} + \frac{P_{0}}{1-Q_{*}}$ \\
  $P_{1} < Q_{*}, P_{0} > 1 - Q_{*}$ & $1$ & $\frac{P_{1} - Q_{*}}{1-Q_{*}}$ & 0 & $\frac{P_{0}}{1 - Q_{*}}$ & $0$ & 1 & 1 & $1 - \frac{P_{0}}{1 - Q_{*}}$ & $\frac{1 - P_{1} + P_{0}}{1 - Q_{*}}$ \\
  $P_{1} > Q_{*}, P_{0} < 1 - Q_{*}$ & $\frac{P_{1}}{Q_{*}}$ & 0 & $\frac{P_{0} -1 + Q_{*}}{Q_{*}}$ & $1$ & $1 - \frac{P_{1}}{Q_{*}}$ & 1 & 1 & $0$ & $\frac{P_{1} - P_{0} + 1}{Q_{*}}$ \\
  $P_{1} < Q_{*}, P_{0} < 1 - Q_{*}$ & $1$ & $\frac{P_{1} - Q_{*}}{1-Q_{*}}$ & $\frac{P_{0} -1 + Q_{*}}{Q_{*}}$ & $1$ & $0$ & 1 & 1 & $0$ & $\frac{1 - P_{1}}{1 - Q_{*}} + \frac{1 - P_{0}}{Q_{*}}$ \\  
\end{tabular}
\end{table}
\normalsize
\noindent A similar derivation shows that
\[
\delta \geq \max\left(-\frac{1-P_{1}+P_{0}}{Q_*}, \ -\frac{1+P_{1}-P_{0}}{1-Q_*}, \ -\frac{P_0}{Q_*} - \frac{P_1}{1-Q_*}, \ -\frac{1-P_1}{Q_*} - \frac{1-P_0}{1-Q_*}\right).
\]

Taking a look at these bounds, suppose that $P_{1} \geq 1-P_{0}$, then the upper bound as a function of $Q_{*}$ is:
\[
  U(Q_{*}) = \begin{cases}
               \frac{P_{0} + (1 - P_{1})}{1 - Q_{*}} & \text{if}\; Q_{*} \leq 1 - P_{0} \\
               \frac{1- P_{0}}{Q_{*}} + \frac{1 - P_{1}}{1 - Q_{*}} & \text{if}\; Q_{*} \in [1 - P_{0}, P_{1}] \\
               \frac{P_{1} + (1 - P_{0})}{Q_{*}} &\text{if}\; Q_{*} \geq P_{1}
             \end{cases}             
\]

Notice that the numerators of all these functions are positive, so the first bounding function $(P_{0} + 1- P_{1})/(1 - Q_{*})$ is montonically increasing over its range and the third, $(P_{1} + (1 - P_{0})) / Q_{*}$ is monotonically decreasing over its range. Finally, inspection of the middle bounding function shows that it is convex over its range. This implies that the this function must have its maximum at one of the bound points, $1 - P_{0}$ or $P_{1}$. Taking the maximum of these two values, and comparing them to the maximum two values from the situation when $P_{1} \leq 1 - P_{0}$ gives the expression of the bounds in the result. A similar approach applies to the lower bounds as well.

Sharpness of these bounds is implied by the linear nature of the optimization function and the convexity of the feasible set. If these bounds were not sharp, this would imply that that there are bounds sharper than these that contain all values of $\delta$ consistent with the data and maintained assumptions. But this is clearly contradicted by the fact that the solutions in Table~\ref{t:bfs} are feasible and would fall outside these supposedly sharper bounds. 
\end{proof}

\subsection{Bounds under additional assumptions}\label{sec:bounds-proof}

To derive bounds under additional assumptions, we first derive bounds conditional on the strata probabilities. 

\begin{lemma}\label{lem:general-bounds}
  The bounds on $\delta$ for given values of $\bm{\rho}$ are $\delta \in [\delta_{L}(\bm{\rho}), \delta_{U}(\bm{\rho})]$, where
  \[
  \begin{aligned}    
    \delta_U(\bm{\rho}) &=
    P_{111}\left( \frac{Q_{11}}{Q_{*}} \right) - P_{011}\left( \frac{Q_{01}}{Q_*} \right) - P_{110}\left(\frac{1 - Q_{11}}{1-Q_{*}} \right) + P_{010}\left( \frac{1 - Q_{01}}{1 - Q_*} \right) \\
&+ \frac{1}{Q_*(1 - Q_*)}\min\left\{ 
\begin{matrix}
  P_{110}(1 - Q_{11}) \\
  \rho_{011} + \rho_{001} \\
  Q_* - P_{111}Q_{11}
\end{matrix} \right\} +
\frac{1}{Q_*(1 - Q_*)}\min\left\{
  \begin{matrix}
    P_{011}Q_{01} \\
    \rho_{110} + \rho_{010}\\
    1- Q_* - P_{010}(1 - Q_{01})
  \end{matrix}
\right\},
\end{aligned}
\]
and,
\[
  \begin{aligned}    
    \delta_L(\bm{\rho}) &=
    P_{111}\left( \frac{Q_{11}}{Q_{*}} \right) - P_{011}\left( \frac{Q_{01}}{Q_*} \right) - P_{110}\left(\frac{1 - Q_{11}}{1-Q_{*}} \right) + P_{010}\left( \frac{1 - Q_{01}}{1 - Q_*} \right) \\
&+ \frac{1}{Q_*(1 - Q_*)}\max\left\{ 
\begin{matrix}
  -P_{111}Q_{11} \\
  -\rho_{110} - \rho_{100} \\
  -1 + Q_* + P_{110}(1 - Q_{11})
\end{matrix} \right\} +
\frac{1}{Q_*(1 - Q_*)}\max\left\{
  \begin{matrix}
    -P_{011}Q_{01} \\
    -\rho_{001} - \rho_{101}\\
    -Q_* + P_{011}Q_{01}
  \end{matrix}
\right\}.
\end{aligned}
\]

\end{lemma}

\begin{proof}
Conditional on $\rho_{s}$  and $Q_{*}$, deriving the bounds on $\delta$ is a standard linear programming problem. We now describe the process for deriving these bounds at a general level. Without any assumptions,  we are interested in maximizing or minimizing the objective function,
\[
  \begin{aligned}
    \delta = & P_{111}\left( \frac{Q_{11}}{Q_{*}} \right) - P_{011}\left( \frac{Q_{01}}{Q_*} \right) - P_{110}\left(\frac{1 - Q_{11}}{1-Q_{*}} \right) + P_{010}\left( \frac{1 - Q_{01}}{1 - Q_*} \right) \\
    & + \mu_{011}(1,1)\frac{\rho_{011}}{Q_*(1 - Q_*)} + \mu_{001}(1,1)\frac{\rho_{001}}{Q_*(1 - Q_*)} \\
    & - \mu_{110}(1,1)\frac{\rho_{110}}{Q_*(1 - Q_*)} -\mu_{100}(1,1)\frac{\rho_{100}}{Q_*(1 - Q_*)} \\
    & + \mu_{110}(0,1)\frac{\rho_{110}}{Q_*(1 - Q_*)} + \mu_{010}(0,1)\frac{\rho_{010}}{Q_*(1 - Q_*)} \\
    & - \mu_{101}(0,1)\frac{\rho_{101}}{Q_*(1 - Q_*)} - \mu_{001}(0,1)\frac{\rho_{001}}{Q_*(1 - Q_*)},
  \end{aligned}
\]
subject to the constraints
\[
  \begin{aligned}
    P_{111}Q_{11} &= \mu_{111}(1,1)\rho_{111} + \mu_{101}(1,1)\rho_{101} + \mu_{110}(1,1)\rho_{110} + \mu_{100}(1,1)\rho_{100} \\
    P_{011}Q_{01} &= \mu_{111}(0,1)\rho_{111} + \mu_{011}(0,1)\rho_{011} + \mu_{110}(0,1)\rho_{110} + \mu_{010}(0,1)\rho_{010} \\
    P_{110}(1-Q_{11}) &= \mu_{011}(1,1)\rho_{011} + \mu_{110}(1,1)\rho_{110} + \mu_{010}(1,1)\rho_{010} + \mu_{000}(1,1)\rho_{000} \\
    P_{010}(1-Q_{01}) &= \mu_{101}(0,1)\rho_{101} +
    \mu_{001}(0,1)\rho_{001} + \mu_{100}(0,1)\rho_{100} +
    \mu_{000}(0,1)\rho_{000} \\
    0 & \leq \mu_{s}(t) \leq 1, \quad \forall s, t.
  \end{aligned} 
\]
For this step, we do not need to specify constraints on $\rho_s$ because we consider them fixed (and $Q_*$ is a linear function of $\rho_s$). The simplex tableau method yields the given bounds. 
\end{proof}

\begin{proof}[Proof of Proposition~\ref{prop:monotonicity}]
Recall the constraints on the strata probabilities:
  \[
  \begin{aligned}
    Q_{11} &= \rho_{111} + \rho_{101} + \rho_{110} + \rho_{100} \\
    Q_{01} &= \rho_{111} + \rho_{011} + \rho_{110} + \rho_{010} \\
    Q_* & = \rho_{111} + \rho_{011} + \rho_{101} + \rho_{001},
  \end{aligned}
\]
Under monotonicity, we only have strata $S_{i} \in \{111, 110, 010, 100, 000\}$, so we have  $Q_{*}= \P[M^{*}_i = 1] = \rho_{111}$ and $\rho_{110} + \rho_{010} = Q_{01} - Q_{*}$ and $\rho_{110} + \rho_{100} = Q_{11} - Q_{*}$. Plugging these values into the bounds from Lemma~\ref{lem:general-bounds}, we obtain
\[
\begin{aligned}    
    \delta_U(Q_*) &=
    P_{111}\left( \frac{Q_{11}}{Q_{*}} \right) - P_{011}\left( \frac{Q_{01}}{Q_*} \right) - P_{110}\left(\frac{1 - Q_{11}}{1-Q_{*}} \right) + P_{010}\left( \frac{1 - Q_{01}}{1 - Q_*} \right) \\
&+ \frac{1}{Q_*(1 - Q_*)}\min\left\{ 
\begin{matrix}
  P_{110}(1 - Q_{11}) \\
  0 \\
  Q_* - P_{111}Q_{11}
\end{matrix} \right\} +
\frac{1}{Q_*(1 - Q_*)}\min\left\{
  \begin{matrix}
    P_{011}Q_{01} \\
    Q_{01} - Q_{*}\\
    1- Q_* - P_{010}(1 - Q_{01})
  \end{matrix}
\right\},
\end{aligned}
\]
and,
\[
  \begin{aligned}    
    \delta_L(Q_{*}) &=
    P_{111}\left( \frac{Q_{11}}{Q_{*}} \right) - P_{011}\left( \frac{Q_{01}}{Q_*} \right) - P_{110}\left(\frac{1 - Q_{11}}{1-Q_{*}} \right) + P_{010}\left( \frac{1 - Q_{01}}{1 - Q_*} \right) \\
&+ \frac{1}{Q_*(1 - Q_*)}\max\left\{ 
\begin{matrix}
  -P_{111}Q_{11} \\
  -Q_{11} - Q_{*} \\
  -1 + Q_* + P_{110}(1 - Q_{11})
\end{matrix} \right\} +
\frac{1}{Q_*(1 - Q_*)}\max\left\{
  \begin{matrix}
    -P_{011}Q_{01} \\
    0\\
    -Q_* + P_{011}Q_{01}
  \end{matrix}
\right\}.
\end{aligned}
\]

We further simplify the upper bound expression by noting that $P_{110}(1 - Q_{11}) \geq 0$ and $Q_{01} - Q_{*} \leq 1 - Q_{*} - P_{010}(1 - Q_{01})$. The lower bound simplifies because $Q_{11} - Q_{*} \leq 1 - Q_{*} - P_{110}(1 - Q_{11})$ and $P_{011}Q_{01} \geq 0$. Removing these extraenous conditions gives the result in the text. 

\end{proof}

\begin{proof}[Proof of Proposition~\ref{prop:stability}]
  Under the maintained assumptions, $Q_{*} = Q_{01}$, which we
  plug into the expression of
  Proposition~\ref{prop:monotonicity}. Then, the result is immediate upon noting that $P_{011}Q_{01} - Q_{01} \leq 0$, $P_{011}Q_{01} \geq 0$ and rearranging terms. 
\end{proof}

For calculating the bounds under the sensitivity constraints, we can
take the bounds from Lemma~\ref{lem:general-bounds} and solve a corresponding
linear programming problem to optimize them with respect to the principal strata probabilities. For example, depending the observed data, the upper bound will depend on $\rho_{011} + \rho_{001}$, $\rho_{110} + \rho_{010}$, or $\rho_{011} + \rho_{001} + \rho_{110} + \rho_{010}$. To find the upper bound across values of $\bm{\rho}$, we apply the linear programming machinery to finding the upper bound for each of these quantities subject to the constraints that
\[
  \begin{aligned}
    Q_{11} &= \rho_{111} + \rho_{101} + \rho_{110} + \rho_{100} \\
    Q_{01} &= \rho_{111} + \rho_{011} + \rho_{110} + \rho_{010} \\
    Q_* & = \rho_{111} + \rho_{011} + \rho_{101} + \rho_{001},
  \end{aligned}
\]
where $0 \leq \rho_s \leq 1$ for all $s$ and $\sum_{s\in\mathcal{S}} \rho_s = 1$. Note that for the sensitivity analysis, we may impose additional constraints on $\rho_s$ in this step. As an example, for the objection function of $\rho_{011} + \rho_{001}$, we have the upper bound
\[
  \min\left\{1 - Q_{11}, Q_*, 1 - Q_{01} + Q_*, \frac{1}{2}(1 - Q_{11} + Q_*), 1 + Q_{01} - Q_{11}\right\}.
\]
Plugging these bounds into the upper bound $\delta_U(\bm{\rho})$ will
yield an upper bound purely as a function of observed parameters and $Q_*$ and  $\gamma$ ( the sensitivity parameter). Under some of our assumptions, inspection of the resulting functions reveals that the maximum of these functions can only occur at a handful of critical values of $Q_*$ which can be evaluated and compared quickly. Otherwise, we use a standard optimization routine to find the value of $Q_*$ that maximizes the upper bound or minimizes the lower bound.

%%% Local Variables:
%%% TeX-master: "prepost"
%%% End:

\section{Estimation and Inference Details}
\label{sec:estimation_details}

The discussion of the bounds in the main text focused on population level bounds---that is, we identified the bounds in terms of population quantities such as $P_{tzm}$. Estimation and inference for the bounds with a sample poses some important difficulties. The most obvious way to estimate these bounds is to plug in sample version of the population quantities and solve the above linear programming problem to obtain bounds. Unfortunately, the asymptotic distribution of estimators based on this plug-in approach do not have the standard asymptotics due to the lack of differentiability of the bounds as a function of the data. \citet{AndHan09} show that naive bootstrap methods are not valid for these types of estimators due to these issues.

Below we mostly focus on the random placement design, where estimation and inference is the most complicated. In the pre-test design, the bounds are simple functions of the parameters of interest, so we can use standard asymptotic variance estimators, combined with the approach of \cite{ImbMan04} to obtain confidence intervals for the parameters of interest. For the post-test bounds, we obtain standard errors for the bounds based on the nonparametric bootstrap and then use these in the approach of \cite{ImbMan04}. In simulations, we found this to have similar performance to the more complicated \emph{union bound} approach of \cite{YeKeeHas23}.

\subsection{Estimation and inference for the random placement design}

For the random placement design, we do not generally have closed-form expressions for the bounds, and so we can reformulate the estimation and inference problem based on moment conditions that feed into sample versions of a population criterion function \citep{CheHonTam07}. In the general analysis of the random placement design, we define the parameters of our model as
$$
\psi_{y_{1}y_{0},m_{1}}(t, m_{0}) = \Pr[Y^{*}_{i}(t) = y_{1}, Y_{i}(t) = y_{0}, M_{i}(t) = m_{1} \mid M_{i}^{*} = m_{0}],
$$
so that the parameter of interest can be written as
$$
\delta = \sum_{y_{0}m_{1}} \psi_{1y_{0}m_{1}}(1, 1) - \psi_{1y_{0}m_{1}}(0, 1)  - \psi_{1y_{0}m_{1}}(1, 0) + \psi_{1y_{0}m_{1}}(0, 0), 
$$
with constraints
$$
\begin{aligned}
  0 &\leq \psi_{y_{1}y_{0},m_{1}}(t, m_{0}) \leq 1 \\
  &\sum_{y_{1} = 0}^{1} \sum_{y_{0} = 0}^{1} \sum_{m_{1} = 0} ^{1} \psi_{y_{1}y_{0},m_{1}}(t, m_{0}) = 1.
\end{aligned}
$$

Let $W_{i} = (Y_{i}, M_{i}, T_{i}, Z_{i})$ be the observed data vector with possible realized value $w \equiv (w_{y}, w_{m}, w_{t}, w_{z})$. Abusing notation, we let $w_{1} = (w_{y}, w_{m}, w_{t}, 1)$ and $w_{0} = (w_{y}, w_{m}, w_{t}, 0)$. The randomization and consistency assumptions imply the following moment conditions:
$$
\begin{aligned}
  \E[g_{w_{1}}(W_{i}, \psi) \mid Z_{i} = 1] &= \P(Y_{i} = w_{y}, M_{i} = w_{m}, T_{i} = w_{t} \mid Z_{i} = 1) \\ &\qquad - Q_{*} \left(\sum_{y_{0}} \psi_{w_{y}y_{0}w_{m}} (w_{t}, 1)\right) - (1 - Q_{*})\left(\sum_{y_{0}} \psi_{w_{y}y_{0}w_{m}} (w_{t}, 0)\right)  \\
\E[g_{w_{0}}(W_{i}, \psi) \mid Z_{i} = 0] &= \P(Y_{i} = w_{y}, T_{i} = w_{t} \mid M_{i} = w_{m}, Z_{i} = 0)  - \left(\sum_{y_{1}} \sum_{m_{1}} \psi_{y_{1}w_{y}m_{1}} (w_{t}, w_{m})\right).
\end{aligned}
$$
There are $d_{1} = 8$ for the post-test data and $d_{0} = 8$ restrictions for the pre-test data.

Define $r_{d}(\psi)$ encode the deterministic restrictions on the $\psi$ values and let $\Psi^{\dagger}_{d} = \{\psi \;:\; r_{d}(\psi) \geq 0\}$ be the values of the underlying parameters that satisfy these restrictions. These restrictions include assumptions like monotonicity that would cause certain $\psi$ values to be set to zero or the sensitivity analysis specifications that limit the size of a group of $\psi$ values. Under these maintained assumptions, we can characterize the identified set as
$$
\Psi^{\star} = \{\psi \in \Psi^{\dagger}_{d} \;:\; \E[g_{w_{1}}(W_{i}, \psi)] = 0, E[g_{w_{0}}(W_{i}, \psi)] = 0 \;\forall w_{1},w_{0} \in \{0,1\}^{3}\}.
$$

One way to define a distance from the identified set is with a population criterion function. Let the moment conditions be indexed by $j$ such that $j=1,\ldots, 8$ correspond to $\{g_{w_{1}}\}$ and $j=9, \ldots, 16$ correspond to $\{g_{w_{0}}\}$. Then, the population criterion (loss) function is  
\begin{equation}
  \label{eq:3}
  L(\psi) = \sum_{j=1}^{16} |\E[g_{j}(W_{i}, \psi)]|
\end{equation}
Following~\cite{Torgovitsky19}, we use absolute value loss here to ensure that we can leverage linear programming techniques for computational convenience. We can obtain an empirical version of the criterion,
\begin{equation}
  \label{eq:4}
  L_{n}(\psi) = \sum_{j=1}^{16} \sqrt{n}|\overline{g}_{j}(W_{i}, \psi)|,
\end{equation}
where, for instance, 
$$
\begin{aligned}
  \overline{g}_{w_{1}}(W_{i}, \psi) &= \P_{n}(Y_{i} = w_{y}, M_{i} = w_{m}, T_{i} = w_{t} \mid Z_{i} = 1) \\ &\qquad - \widehat{Q}_{*} \left(\sum_{y_{0}} \psi_{w_{y}y_{0}w_{m}} (w_{t}, 1)\right) - (1 - \widehat{Q}_{*})\left(\sum_{y_{0}} \psi_{w_{y}y_{0}w_{m}} (w_{t}, 0)\right),
\end{aligned}
$$
and $\P_{n}$ is the in-sample distribution.

We could proceed with estimating the bounds for a given set of assumptions by searching over the parameter space such that $L_{n} = 0$, but this is often a fragile approach. In particular, it may be the case that restrictions hold in the population but fail to hold in empirical samples due to sampling variability so that the minimum value of $L_{n}$ is strictly greater than 0. As an alternative approach, we can first find the minimum value of $L_{n}$ in the sample and then find extreme values of the parameter under parameter values that are close to that minimum.

We first define the sample minimum of the criterion function under the maintained deterministic restrictions,
\begin{equation}
  \label{eq:5}
  \overline{L}_{n} = \inf_{\psi \in \Psi^{\dagger}_{d}} L_{n}(\psi).
\end{equation}
We can then estimate the upper and lower bounds by finding the minimum and maximum values of $\delta$ that come close to this value:
$$
\begin{aligned}
  \widehat{\delta}_{L} &= \min_{\psi \in \Psi^{\dagger}_{d}} \delta(\psi) \;\text{s.t.}\; L_{n}(\psi) \leq \overline{L}_{n}(1 + \epsilon_{n}), \\
  \widehat{\delta}_{U} &= \max_{\psi \in \Psi^{\dagger}_{d}} \delta(\psi) \;\text{s.t.}\; L_{n}(\psi) \leq \overline{L}_{n}(1 + \epsilon_{n}).
\end{aligned}
$$
The tuning parameter $\epsilon_{n}$ controls how close we require the criterion function of the bounds to be to the overall sample minimizer. This approach requires $\epsilon_{n} \to 0$ as $n \to \infty$. We take $\epsilon_{n} = 0.25$ in our implementation, which has shown good performance in simulations.

For the confidence intervals, we use the nonparametric bootstrap to obtain standard error estimates of each of the bounds and then apply the \citep{ImbMan04} approach. In simulations, we found this approach to be slightly conservative and other competing methods, such as \cite{CheNewSan23}, to undercover the true parameter slightly, at least in simulations similar to our data example. Thus, we use the nonparametric bootstrap approach for our confidence intervals. 

\section{Parametric Bayesian Approach to Incorporate Covariates}\label{covariates}

The nonparametric bounds above are sharp in the sense that they leverage all
information about the outcome, moderator, treatment, and question order.
Researchers, however, often have additional data in the form of covariates that
may help reduce the uncertainty of their estimates. Here, we consider a Bayesian
parametric model of the principal strata approach to the pre-test, post-test,
and random placement designs, building on the work of \citet{MeaPac13}
\citep[see also][]{ImbRub97,HirImbRub00}. Unlike the nonparametric bounds
approach, a Bayesian model allows us to incorporate prior information about the
data-generating process in a smooth and flexible
manner.\footnote{\cite{LevBonZen23} proposes a way to incorporate covariates on
  nonparametric bounds when the quantity of interest can be written as an
  average of covariate-specific quantities. Unfortunately, we cannot write the
  interaction in this way because it is the difference between two different
  CATEs that condition on different subsets of the data. One could use their
  approach on each of the individual CATEs and combine those bounds for the
  interaction, but the resulting bounds would not be
  sharp.} 

\subsection{The Model}

Our approach focuses on a data augmentation strategy that models the
joint distribution of the outcomes and the principal strata, the
latter of which are not directly observable. We allow the distribution
of the potential outcomes and principal strata conditional on those
strata to further depend on covariates via a binomial and multinomial
logistic model, respectively:
\[
  \begin{aligned}
    \P(Y_i= 1 \mid T_i = t, Z_i = z, S_i = s, \bm{X}_i) &= \mu_{is}(t, z) = \textrm{logit}^{-1}(\alpha_{tz|s} + \mathbold{X}'_i\mathbold{\beta}), \\
    \P(S_i = s \mid \mathbold{X}_i) &=\rho_{is} =  \frac{\exp(\mathbold{X}'_i\mathbold{\psi}_s)}{\sum_{j \in \mathcal{S}} \exp(\mathbold{X}'_i\mathbold{\psi}_j)},
  \end{aligned}
\]
where $\mathbold{X}_i$ denotes observed pre-treatment covariates that might be predictive of unit $i$'s outcome and principal strata. Note that the strata probabilities do not depend on $T_i$ and $Z_i$ due to randomization. We gather the parameters as $\mathbold{\alpha} = \{\alpha_{tz|s}\}$ and $\mathbold{\psi} = \{\mathbold{\psi}_s\}$.
We can easily incorporate  assumptions like monotonicity and stable moderators by simply restricting the space of possible principal strata $\mathcal{S}$.

Our goal is to make inferences about the posterior distribution of these parameters and the ultimate quantities of interest, $\tau(d)$ and $\delta$. There are two ways to represent these quantities under this parametric model, resulting in two different types of posterior distributions. The first is based on \emph{population} inference and derives expressions for $\tau(d)$ and $\delta$ purely in terms of the parameters of the model. The second is based on \emph{in-sample} inference and derives expressions for $\tau(d)$ and $\delta$ in terms of  potential outcomes in a particular sample.

For the population inference approach, we first note that due to consistency and randomization, we have $\mu_{is}(t, z) = \P(Y_i(t, z) = 1 \mid S_i = s, \mathbold{X}_i)$. Thus, we can write the values of the quantities of interest for a given unit as,
\[
   \tau_i(m) = \E\left[  Y^{*}_i(1) - Y^{*}_i(0)  \mid M^{*}_{i} = m, X_i \right] \ =  \ \sum_{s \in \mathcal{S}^*_d} (\mu_{is}(1,1)- \mu_{is}(0, 1))\rho_{is},
\]
and $\delta_i = \tau_i(1) - \tau_i(0)$, where we omit the implied dependence on
$(\mathbold{\alpha, \beta, \psi})$ and remember that $\mathcal{S}^*_d$
is the set of strata levels such that $M^{*}_{i} = m$, for $m\in \{0,1\}$.  Using the
empirical distribution of covariates, the average of these conditional mean differences and
interactions will equal the overall quantities of interest, i.e., $\tau(m) = \sum_{i=1}^n \tau_{i}(m)/n$ and 
$\delta = \sum_{i=1}^n \delta_i/n$.

The in-sample versions of the quantities of interest are more straightforward,
since they are simply the conditional mean differences and interaction among the units in the sample,
\[
\tau_s(m) = \frac{\sum_{i=1}^n \mathbb{I}(M^{*}_{i} = m)\left\{ Y^{*}_i(1) - Y^{*}_i(0) \right\}}{\sum_{i=1}^n \mathbb{I}(M^{*}_{i} = m)},
\]
and $\delta_s = \tau_s(1)-\tau_s(0)$. Obviously, across repeated samples, we can relate these to the population quantities as $\E[\tau_s(m)] = \tau(m)$ and $\E[\delta_s] = \delta$. 

We develop an efficient Markov Chain Monte Carlo (MCMC) algorithm to take draws from the posterior and then calculate these quantities of interest. Our Gibbs sampler can also be simplified and used for inference in the absence of pre-treatment covariates, which can be viewed as a Bayesian alternative to uncertainty estimation for the partially identified parameters discussed in Section~\ref{subsection_statistical_inference}. We provided details of these algorithms in Supplemental Materials~\ref{section_mcmc}.

This Bayesian approach has the advantage of easily incorporating
covariates, but it does require us to select prior distribution for
the model parameters, some of which are unidentified in the
frequentist sense. Thus, the identification of these parameters will
depend on the prior. To investigate this, we take draws
of the prior predictive distribution under different prior structures,
which we show in Figure~\ref{fig:ppd}. All the priors we considered
are symmetric, but uniform priors on the model parameters lead to
somewhat informative priors on the ultimate quantity of
interest. Thus, we rely on
more dispersed priors for the simulations and the application. We discuss the choice of prior
distribution more fully in Supplemental Materials~\ref{section_mcmc}.

We conduct two simulation studies to demonstrate the gains in efficiency from the monotonicity and stability assumptions and the
incorporation of covariates in the Bayesian approach. The first simulation varies the assumptions of the data-generating process and compares the posterior variance of these distributions across combinations
of our two assumptions. The second simulation varies the predictive
power of the covariates on the outcome and the strata in the data-generating process and compares the variance of the posterior
distributions from Gibbs run on each simulated data set with and
without incorporating covariates. We present these results in Supplemental Materials ~\ref{section_sim}.

\section{MCMC Algorithm}
\label{section_mcmc}

In this section we describe our MCMC algorithm for the Bayesian model of Section~\ref{covariates}. Our goal is to sample from the joint distribution of the parameters and the principal strata indicator,
\[
  \begin{aligned}
    \P(\mathbold{\alpha, \beta, \psi}, \mathbold{S} &\mid \mathbold{Y}, \mathbold{X, T, Z, M})  \propto \\
    \prod_{i=1}^n &\left(\sum_{s \in \mathcal{S}_i} \left[ \P(Y_i\mid T_i, Z_i, S_i = s, \bm{X}_i)\P(S_i = s \mid \mathbold{X}_i)\right]^{\mathbb{I}(S_i = s)} \right)\P(\mathbold{\alpha, \beta, \psi}),
  \end{aligned}
\]
where $\mathcal{S}_i = \mathcal{S}(T_i, Z_i, M_i)$ are the set of principal strata to which unit $i$ could possibly belong. When the set of observed pre-treatment covariates ($\bm{X}_i$) is empty, the parameter space reduces to that of a standard finite mixture model, and sampling from the joint posterior is straightforward. With $\bm{X}_i$, Bayesian inference for the model is more complicated. Traditionally, Bayesian inference for logistic regression models has been challenging due to a lack of a simple Gibbs sampling algorithm. Recently, however, \citet{PolScoWin13} introduced a simple data-augmentation strategy based on the P\'{o}lya-Gamma (PG) distribution, obviating the need for approximate methods or precise tuning of a Metropolis-Hastings algorithm. We use this approach for both the binary and multinomial logistic regression models for the outcome and principal strata, respectively. This allows a simple Gibbs structure where the full conditional posterior distributions of $(\mathbold{\alpha, \beta})$ and $\mathbold{\psi}$ are Normal conditional on specific draws from the PG distribution. 

Conditional on the other parameters, then the full conditional posterior of the principal strata follows a similar form to \citet{HirImbRub00},
\[
\P(S_i = s \mid Y_i, \mathbold{X}_i, T_i, Z_i, M_i, \mathbold{\alpha, \beta, \psi}) = \frac{\mu_{is}(T_i, Z_i)^{Y_i}( 1- \mu_{is}(T_i, Z_i))^{1-Y_i}\rho_{is}}{\sum_{k \in \mathcal{S}_i} \mu_{ik}(T_i, Z_i)^{Y_i}(1-\mu_{ik}(T_i, Z_i))^{1- Y_i}\rho_{ik}},
\]
where we suppress the dependence of $\mu_{is}$ and $\rho_{is}$ on the model parameters. Repeatedly drawing from these full conditional posterior distributions should provide a sample from the above joint posterior and allow for posterior inference in the usual manner. In each iteration, $r \in \{1, \ldots, R\}$, of the algorithm, we have draws
\[
\left( \left\{ \widehat{S}_i^{(r)} \right\}_{i=1}^n, \widehat{\mathbold{\psi}}^{(r)}, \widehat{\mathbold{\alpha}}^{(r)}, \widehat{\mathbold{\beta}}^{(r)} \right).
\]

We can use these draws to generate draws of the population and in-sample versions of the quantity of interest. Given that $\widehat{S}^{(r)}_i$  is the imputed principal strata imputed for unit $i$ in the $r$th draw from the posterior, we let
\[
\widehat{\mu}^{(r)}_{i}(t,z) = \widehat{\mu}_{i,\widehat{S}^{(r)}_i}(t,z, \widehat{\mathbold{\alpha}}^{(r)}, \widehat{\mathbold{\beta}}^{(r)})
\]
be the mean of the potential outcomes conditional on that imputed principal strata. Furthermore, let $\widehat{\rho}^{(r)}_{is}$ be the $r$th draw of the predicted probabilities of each principal strata for each unit. Then, we can calculate the population quantity as
\[
\widehat{\delta}_p^{(r)} = \frac{1}{n} \sum_{i=1}^n \left(\sum_{s \in \mathcal{S}^*_1} (\widehat{\mu}^{(r)}_{is}(1,1)- \widehat{\mu}^{(r)}_{is}(0, 1))\widehat{\rho}^{(r)}_{is}\right) - \left(\sum_{s \in \mathcal{S}^*_0} (\widehat{\mu}^{(r)}_{is}(1,1)- \widehat{\mu}^{(r)}_{is}(0,1))\widehat{\rho}^{(r)}_{is}\right),
\]

For the in-sample quantity, we can then draw \emph{imputed} values of the missing potential outcomes themselves $\widehat{Y}^{*,(r)}_i(1) \sim \text{Bin}(\widehat{\mu}^{(r)}_{i}(1,1))$ and $\widehat{Y}^{*,(r)}_i(0) \sim \text{Bin}(\widehat{\mu}^{(r)}_{i}(0,1))$. We can combine this with the imputed value of $M^{*}_i$, which mechanically derives from $\widehat{S}_i^{{r}}$, to get the $r$th draw from the posterior of $\delta_s$,
\[
\widehat{\delta}^{(r)}_s =  \frac{\sum_{i=1}^n \widehat{M}^{*,(r)}_{i}\left\{ \widehat{Y}^{*,(r)}_i(1) - \widehat{Y}^{*,(r)}_i(0) \right\}}{\sum_{i=1}^n \widehat{M}^{*,(r)}_{i}} - \frac{\sum_{i=1}^n \left(1 - \widehat{M}^{*,(r)}_{i}\right)\left\{ \widehat{Y}^{*,(r)}_i(1) - \widehat{Y}^{*,(r)}_i(0) \right\}}{\sum_{i=1}^n \left(1 - \widehat{M}^{*,(r)}_{i}\right)}.
\]
Broadly speaking, we would not expect very large differences between these two targets, except for slightly less posterior variance for the in-sample version.

\begin{figure}[htbp]
    \centering
        \caption{Prior predictive distribution of the parameter under three different prior distributions: (red) the default priors that scales a Jeffreys prior by the number of principal strata; (blue) a uniform prior on all parameters; and (green) a more extreme prior that has $\alpha = 1/16$.}
        \includegraphics[width=1\textwidth]{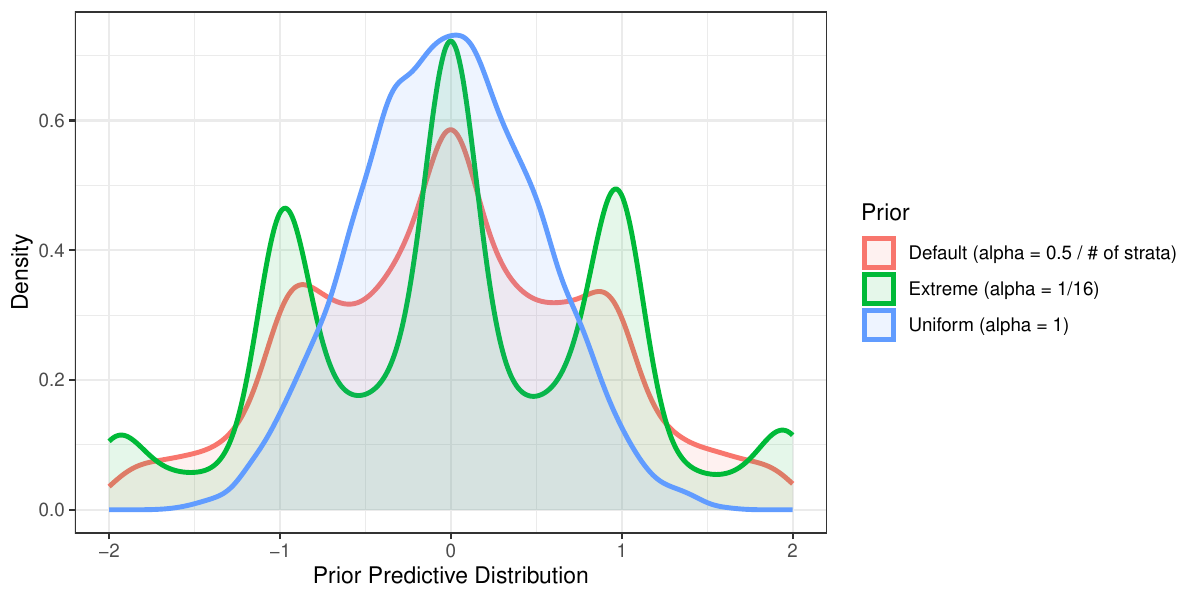}
        \label{fig:ppd}
\end{figure}

As discussed in the main text, the priors need careful attention because they drive the identification of the parameters that are unidentified by the likelihood. One additional complication comes from how the ultimate quantity of interest is a function of the parameters so we cannot directly place, for example, a uniform prior on $\delta$. Figure~\ref{fig:ppd} shows the prior predictive distribution for interaction with three different priors when monotonicity and stable moderator under control are assumed and there are no covariates. The uniform prior on all parameters results in a prior on $\delta$ that has more density in the center of identified range than we might expect. This result is similar to how sums of uniform random variables are not themselves uniform. We can counteract this issue by reducing the Dirichlet and Beta hyperparameters below 1 to put more density at extreme values of the parameters compared to the center. Dropping these parameters down to 1/16 (in green) leads to more mass on strata means closer to 0 or 1 and strata probabilities closer to 0 and 1. In terms of the interaction, this leads to more mass at the values -2, -1, 0, 1, and 2. Our default prior (red) is one that scales the hyperparameters by the inverse of the number of strata to achieve something closer to a uniform distribution.

Additionally, we re-ran the Gibbs empirical analysis of the \cite{horowitz2020can} study, adjusting the priors to the extreme values or uniform values in the previous simulation. The results are displayed in Figure~\ref{fig:hk-priors-plot}, demonstrating the general consistency of the point estimates across starting priors, although there is some fluctuations in the variance of the results. 

\begin{figure}[t!]
    \centering
        \caption{Comparing Bayesian estimates for $\delta$ for default, extreme, and uniform priors}
        \includegraphics[width=1\textwidth]{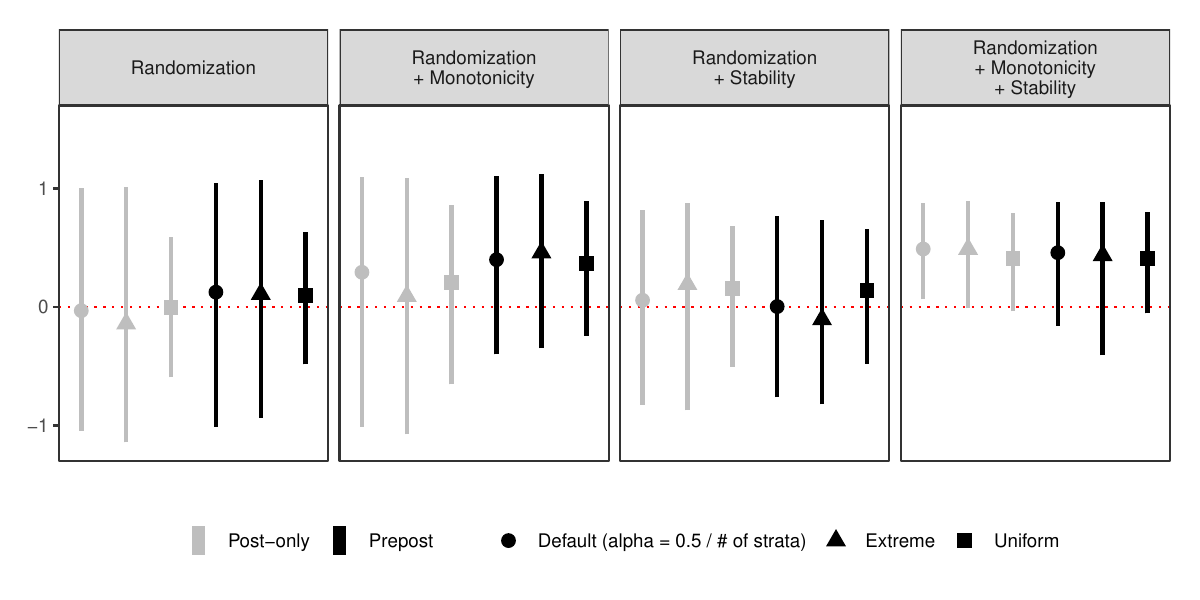}
        \caption*{\textit{Notes:} Figure shows posterior means and 95\% credible intervals for $\delta$ under
        different sets of assumptions, applied to either the post-only data (grey) or the combined pre-post data (black).
        Estimates are shown with default, extreme, and uniform priors, (denoted by circles, triangles, and squares, respectively).
        We follow the original authors in using age, gender, education, and closeness to one's ethnic group as
        covariates. The na\"{i}ve OLS estimates are included for comparison.}
        \label{fig:hk-priors-plot}
\end{figure}

\subsection{Simulation Evidence for the Bayesian Approach}
\label{section_sim}

\paragraph{Simulation Study~I.}
In the first simulation study, we generate simulated data with $n =
1000$ constructed using a data generating process that matches the
Bayesian posterior, pre-specifying coefficient values for the outcome
and principal strata models, randomly drawing values of $Z$, $T$, and
three covariates $X_1$, $X_2$, and $X_3$, and generating values of $Y$
and $M$ from the models. Tables~\ref{tab:betas_assumptions}
and~\ref{tab:psis_assumptions} in the Supplemental Materials display the $\beta$
coefficients for the outcome and $\psi$ coefficients for the for the
true data generating process (DGP). The DGP assumes that monotonicity and stable moderator under
control both hold so that there are three feasible strata
($\mathcal{S} = \{000, 100, 111\}$). Thus, in this setting it would be
most appropriate to incorporate both assumptions into the MCMC
algorithm for sampling from the posterior distribution. Since these
assumptions narrow the nonparametric bounds, we expect the assumptions
to reduce variance of the posterior distribution of $\delta$.

To test this, we perform a Monte Carlo simulation with 1,000
iterations. For each iteration, we calculate the posterior
distribution of $\delta$ with the same data across four different
versions of the MCMC algorithm:  enforcing just the monotonicity
assumption, enforcing just the stable moderator under control
assumption, enforcing neither assumption, and enforcing both
assumptions. Each run of our MCMC algorithm consists of 4 chains with
2,000 iterations each, 200 burn-in (or warm-up) iterations, and a
thinning parameter of 2. Both in-sample and population $\delta$ values
are calculated at each iteration and the variance of the posterior is
calculated from a sample of 1,000 draws from the posterior. This is
done for each of the 1,000 simulated datasets, and for each dataset we
compute the percent reduction in variance compared to the MCMC
algorithm with no assumptions when using the algorithm with the
monotonicity assumption, the stable moderator under control
assumption, or both assumptions.

\begin{figure}[t!]
	\centering
	\includegraphics[width=0.8\textwidth]{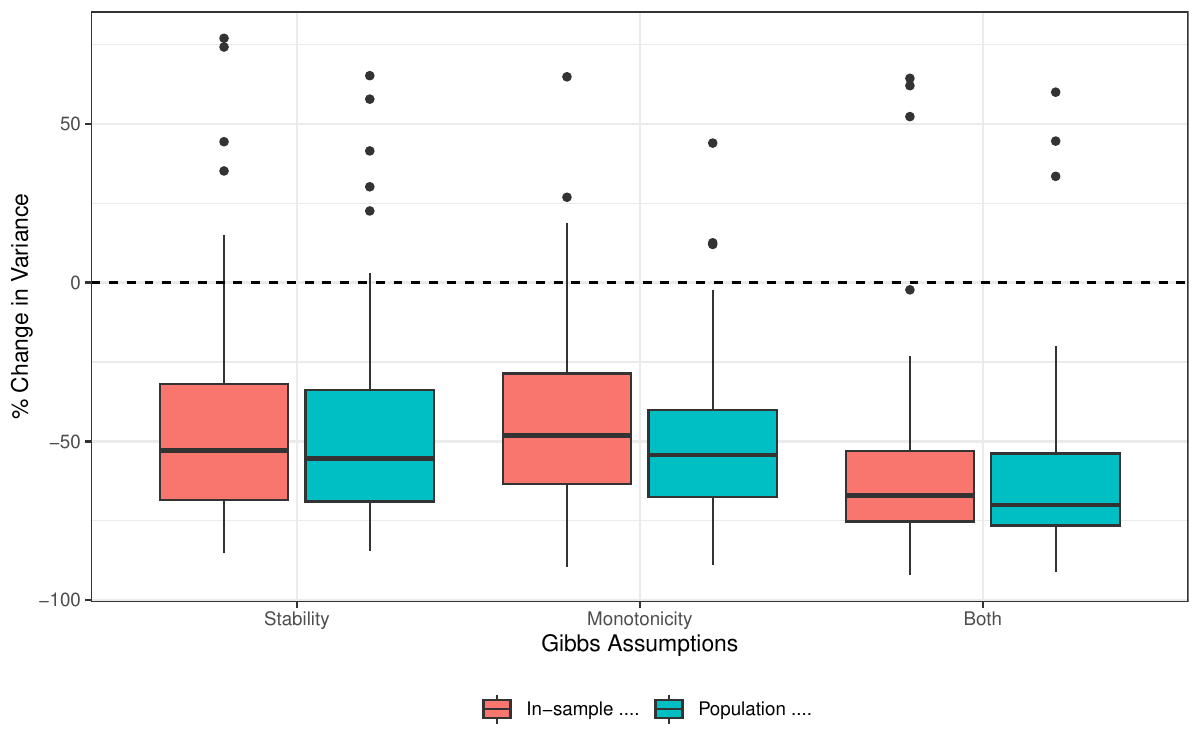}
	\caption{Variance reduction from different combinations of
          assumptions. Boxplots present distribution of \% variance reduction of $\delta$
from the MCMC algorithm with the labeled assumptions compared to same
algorithm with no assumptions, across 1,000 draws of simulated
data. For each simulation iteration, 4 chains were run for each
combination of assumptions. MCMC parameters: 2,000 iterations, 200
burn-in, 2 thinning parameter, simulated data $n = 1,000$.}
	\label{fig:sim_assumptions}
\end{figure}

Figure~\ref{fig:sim_assumptions} presents boxplots for the
distribution of reductions in variance for each combination of
assumptions. Both the monotonicity and stable moderator assumptions on
their own reduce the variance compared to no assumptions, while making
both assumptions reduces the variance even further. The monotonicity
assumption showed a median posterior variance reduction of 40.8\% for
the in-sample $\delta$ and 42.0\% for the population $\delta$. The
stable moderator under control assumption on reduced the posterior
variance by a median reduction of 47.1\% (in-sample $\delta$) and
50.4\% (population $\delta$). The MCMC algorithm with both assumptions
exhibited a posterior variance reduction of 59.3\% (in-sample
$\delta$) and 61.7\% (population $\delta$).

\paragraph{Simulation Study~II.}
In the second simulation study, we drew a series of simulated datasets
under different conditions where the covariates had a weak, medium, or
strong correspondence with the outcome and principal strata in the
data generating processes. Thus, there were six total conditions:
Weak, Medium, and Strong influence in the outcome DGP; and Weak,
Medium, and Strong influence in the principal strata DGP. The values
of the coefficients for these conditions are  $\beta$ and $\psi$
values of 0, 0.25, and 0.5, respectively. When varying the influence
of covariates in the outcome DGP, the influence of covariates on the
strata was held constant, and the influence in the outcome model was
similarly held constant when varying influence in the strata
DGP. Fixed values of the $\beta$'s and $\psi$'s are shown in the
Supplemental Materials in
Tables~\ref{tab:beta_covar_sim}~and~\ref{tab:psi_covar_sim}.

\begin{figure}[t!]
	\centering
	\includegraphics[width=0.8\textwidth]{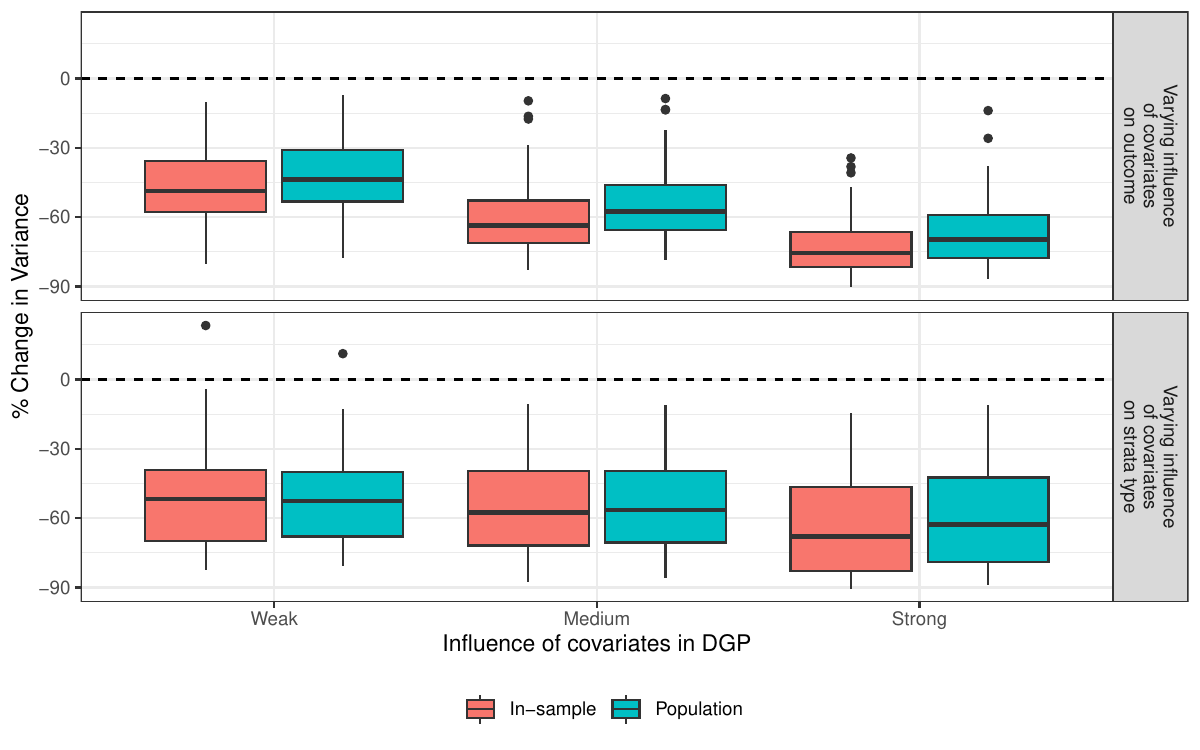}
	\caption{Variance reduction from incorporation of
          covariates. Boxplots present distribution of \% variance
          reduction of $\delta$ from Gibbs with covariates compared to
          MCMC without covariates, across 1,000 draws of simulated
          data. For each draw 4 MCMC chains were run for each
          combination of assumptions. MCMC parameters: 2,000
          iterations, 200 burn-in, 2 thinning parameter, simulated
          data $n = 1,000$.}
	\label{fig:sim_covars}
\end{figure}

For each condition, we drew 1,000 simulated datasets and ran the MCMC
algorithm twice: one time incorporating covariates and one time
omitting them. Each MCMC run consisted of the same iterations,
burn-in, and thinning parameters as in the previous simulation
study. We again calculate in-sample and population $\delta$ values for
each iteration of the Gibbs and calculate the variance of the
posterior distribution and the \% variance reduction comparing the
Gibbs with covariates to that without. Figure~\ref{fig:sim_covars}
presents boxplots for the distribution in variance reduction. When we
vary the influence of covariates on the outcome, we see a clear
variance reduction in all conditions, and we observe a larger reduction as the
influence of covariates on the outcome in the DGP increases. When
testing the impact of incorporating covariates across different levels
of influence in the DGP on the strata, the pattern is less pronounced,
with overall reduction increases in all conditions but slightly lower
reductions in the Medium than Weak condition. The Strong condition
still has the largest variance reduction overall, however, so in
general the efficiency gains from incorporating covariates are
increasing as the influence of covariates on strata in the data
increases.

\section{Additional Simulation Details}

\begin{table}[ht]
\caption{$\beta$ Values for DGP in Bayesian Assumptions Simulation}
\label{tab:betas_assumptions}
\centering
\begin{tabular}{rr}
  \hline
 Variable & $\beta$  \\
  \hline
(Intercept) & -2.00 \\
  X1 & 1.00 \\
  X2 & 0.15 \\
  X3 (Medium) & 0.24 \\
  X3 (Large) & 0.28 \\
  T & 0.83 \\
  Z & -0.01 \\
  T:Z & 0.11 \\
  S111 & 0.41 \\
  S100 & 0.62 \\
  T:S111 & 0.01 \\
  T:S100 & 0.23 \\
  Z:S111 & 0.20 \\
  Z:S100 & -0.02 \\
  T:Z:S111 & -0.90 \\
  T:Z:S100 & 0.09 \\
   \hline
\end{tabular}
\end{table}

% latex table generated in R 4.0.5 by xtable 1.8-4 package
% Wed Sep 22 21:09:41 2021
\begin{table}[ht]
\caption{$\psi$ Values for DGP in Bayesian Assumptions Simulation}
\label{tab:psis_assumptions}
\centering
\begin{tabular}{rrrr}
  \hline
 & S111 & s100 & s000  \\
  \hline
(Intercept) & -2.06 & -1.00 & 0.00  \\
  X1 & 2.00 & 1.50 & 0.00   \\
  X2 & 0.50 & 0.17 & 0.00   \\
  X3 (Medium) & 1.35 & -0.28 & 0.00  \\
  X3 (Large) & 1.75 & -1.01 & 0.00  \\
   \hline
\end{tabular}
\end{table}

\begin{table}[ht]

\centering
\caption{Fixed $\beta$ Values in Covariate Simulation}
\label{tab:beta_covar_sim}
\begin{tabular}{rr}
  \hline
Variable & $\beta$ \\
  \hline
(Intercept) & -1.00 \\
  X1 & 1.00 \\
  X2 & 0.50 \\
  X3 (Medium) & 0.50 \\
  X3 (Large) & 0.28 \\
  T & 0.83 \\
  Z & -0.01 \\
  T:Z & 0.11 \\
  S111 & 0.41 \\
  S100 & 0.62 \\
  T:S111 & 2.00 \\
  T:S100 & -0.13 \\
  Z:S111 & 0.50 \\
  Z:S100 & 0.10 \\
  T:Z:S111 & 0.05 \\
  T:Z:S100 & 0.01 \\
   \hline
\end{tabular}
\end{table}

% latex table generated in R 4.0.5 by xtable 1.8-4 package
% Wed Sep 22 22:51:32 2021
\begin{table}[ht]
\centering
\caption{Fixed $\psi$ values in Bayesian Covariate Simulation}
\label{tab:psi_covar_sim}

\begin{tabular}{rrrr}
  \hline
 & S111 & S100 & S000 \\
  \hline
(Intercept) & -2.06 & -1.00 & 0.00 \\
  X1 & 2.00 & 1.50 & 0.00 \\
  X2 & 0.50 & 0.17 & 0.00 \\
  X3 (Medium) & 1.35 & -0.28 & 0.00 \\
  X3 (Large) & 1.75 & -1.01 & 0.00 \\
   \hline
\end{tabular}
\end{table}

\subsection{Comparing the sharp bounds and Bayesian approach without covariates}

\begin{figure}[t!]
    \centering
        \caption{Comparing non-parametric bounds and Bayesian estimates for $\delta$ under different assumptions}
        \includegraphics[width=1\textwidth]{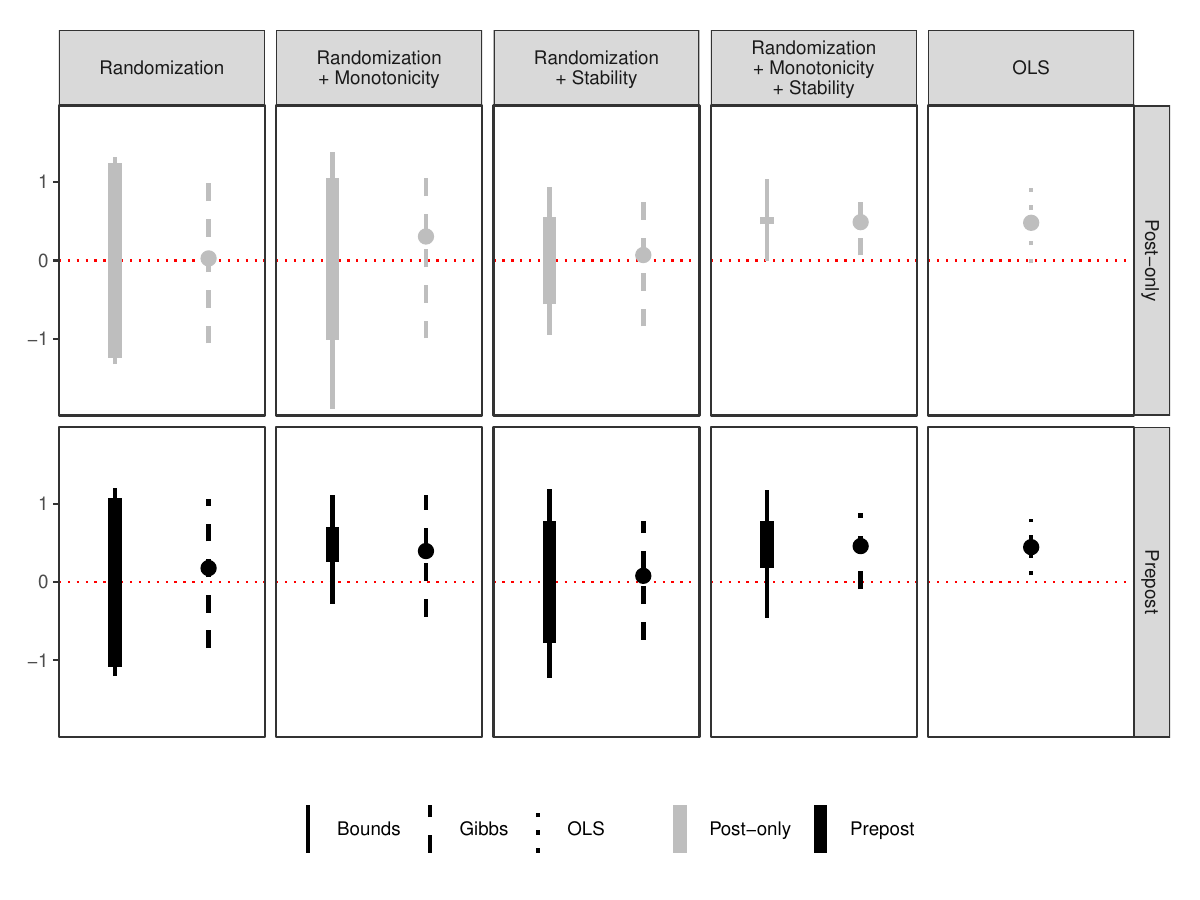}
        \caption*{\textit{Notes:}
        The figure shows nonparametric bounds and Bayesian estimates of the quantity of interest
        under different sets of assumptions applied to either the post-only data (grey) or the combined pre-post data
        (black). The thick bars denote the width of the bounds, and thinner lines denote the 95\% confidence intervals
        around the bounds. Across the first four panels, the thin lines with dots denote the Bayesian posterior mean
        and 95\% credible interval. This estimate included no covariates to facilitate comparison with the
        nonparametric bounds. For the final panel (``OLS''), the thin lines with dots denote the OLS estimate
        and 95\% confidence interval. %{\bf KI: I think we should organize by the
          %estimands.  Then, we should also have in the legend what
          %indicates the nonparametric vs. Bayes.}
          }
        \label{fig:hk-plot2}
\end{figure}

Figure~\ref{fig:hk-plot2} displays the non-parametric bounds (with 95\% confidence
intervals) and Bayesian estimates (posterior means with 95\% credible intervals)
for $\delta$ under different sets of assumptions applied to the post-only data
(in grey) and to the combined pre-post data (in black). We also include the
na\"{i}ve OLS estimate with 95\% confidence interval for comparison in
the final panel.

\subsection{Incorporating covariates into the Bayesian approach}

\begin{figure}[t!]
    \centering
        \caption{Comparing Bayesian estimates for $\delta$ with and without covariates}
        \includegraphics[width=1\textwidth]{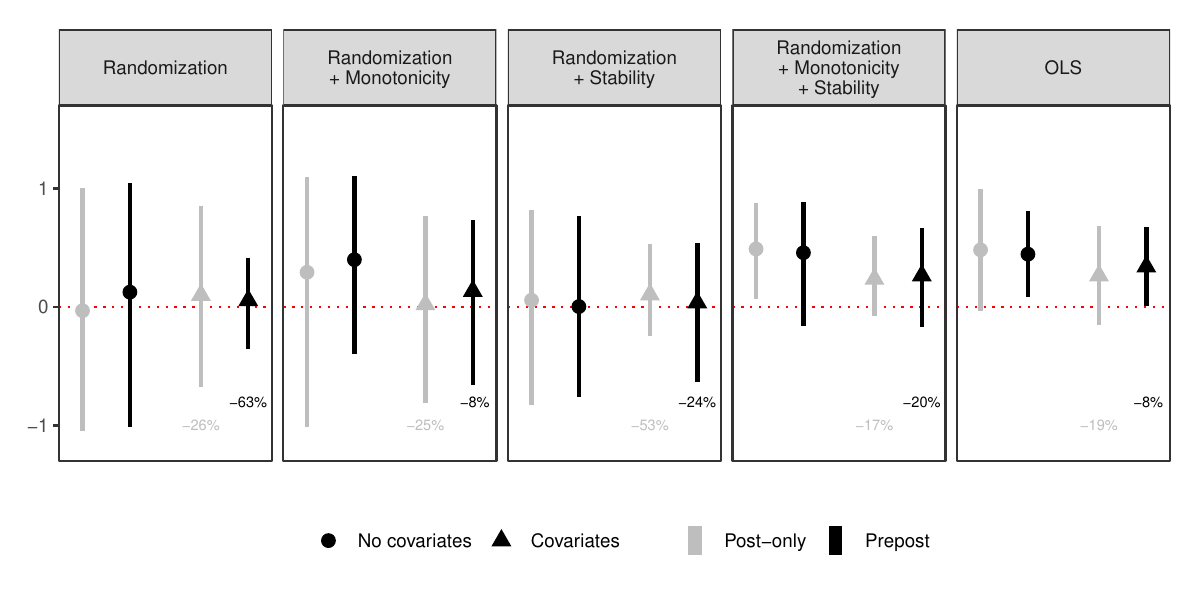}
        \caption*{\textit{Notes:} Figure shows posterior means and 95\% credible intervals for $\delta$ under
        different sets of assumptions applied to either the post-only data (grey) or the combined pre-post data (black).
        Estimates are shown with and without the inclusion of covariates (denoted by triangles and circles, respectively),
        and the numbers indicate the reduction in the width of the credible intervals due to the inclusion of covariates
        for the post-only data (in grey) and the combined pre-post data (in black).
        We follow the original authors in using age, gender, education, and closeness to one's ethnic group as
        covariates. The na\"{i}ve OLS estimates are included for comparison.}
        \label{fig:hk-plot3}
\end{figure}

Thus far,  our Bayesian estimates have omitted covariates to aid comparison with the nonparametric bounds. However, as discussed above, a key
attraction of the Bayesian approach is the ease with which we can
incorporate additional information.  Figure~\ref{fig:hk-plot3}
presents posterior means with 95\% credible intervals, both with and
without covariates, under different assumptions.  We follow
the original authors in using age, gender, education, and closeness to
one's ethnic group as covariates. Including covariates
significantly tightens the credible intervals, especially when fewer
assumptions are imposed.  For example, when only randomization is
assumed, the width of the 95\% credible interval shrinks by more than
50\% when including covariates.  While including covariates
does not alter our substantive conclusions in this case, it does show
that incorporating additional information can lead to large gains in
precision. Since researchers often include a wide range of control
variables in the design of a survey experiment, flexibly leveraging this information is a key advantage of the
Bayesian approach.

%%% Local Variables:
%%% TeX-master: "prepost"
%%% End:

\end{appendices}
\fi

%TC:endignore
\end{document}